\author{Bruce Lee and Andrew Lamperski
\thanks{This work was supported in part by NSF CMMI-1727096}
      \thanks{The authors are with the department of Electrical and
        Computer Engineering, University of Minnesota, Minneapolis,
        MN 55455, USA 
        {\tt\small leex8370@umn.edu, alampers@umn.edu}}
}
\title{\LARGE \bf
  Non-asymptotic
  Closed-Loop System Identification using Autoregressive Processes and Hankel Model
  Reduction}
\let\proof\relax
\let\endproof\relax
\newtheorem{theorem}{\textbf{Theorem}}
\newtheorem{lemma}{\textbf{Lemma}}
\newtheorem{remark}{\textbf{Remark}}
\tikzset{
dashedx/.style={}
}
\pgfplotsset{width=9.1cm,height=5.8cm,compat=newest}
\definecolor{mycolor}{rgb}{1, 0, 0}
\newmdenv[innerlinewidth=0.5pt, roundcorner=4pt,linecolor=mycolor,innerleftmargin=6pt,
innerrightmargin=6pt,innertopmargin=6pt,innerbottommargin=6pt]{mybox}
\newcommand{\argmin}{\operatornamewithlimits{argmin}}
\renewcommand{\H}{\mathbf{H}}
\newcommand{\E}{\mathbb{E}}
\newcommand{\A}{\mathbf{A}}
\newcommand{\B}{\mathbf{B}}
\newcommand{\C}{\mathbf{C}}
\newcommand{\D}{\mathbf{D}}
\newcommand{\K}{\mathbf{K}}
\renewcommand{\P}{\mathbb{P}}
\newcommand{\y}{\mathbf{y}}
\newcommand{\yStar}{\mathbf{y}_{t|t-p:t}^\star}
\newcommand{\yKF}{\mathbf{y}_{t|-\infty:t}^\star}
\renewcommand{\v}{\mathbf{v}}
\newcommand{\x}{\mathbf{x}}
\renewcommand{\u}{\mathbf{u}}
\newcommand{\e}{\mathbf{e}}
\newcommand{\s}{\mathbf{s}}
\newcommand{\z}{\mathbf{z}}
\newcommand{\Z}{\mathbf{Z}}
\renewcommand{\d}{\mathbf{d}}
\newcommand{\Q}{\mathbf{Q}}
\newcommand{\G}{\mathbf{G}}
\newcommand{\Pow}{\mathcal{P}}
\newcommand{\TKF}{\mathbf{Head}}
\newcommand{\Tr}{\mathrm{Tr}}
\newcommand{\Hinf}{\mathcal{H}_{\infty}}
\newcommand{\Tail}{\mathbf{Tail}}
\newcommand{\R}{\mathbb{R}}
\newcommand{\N}{\mathbf{N}}
\begin{document}
\newcounter{num}
\setcounter{num}{1}

\maketitle
\begin{abstract}
  
  %
  %
  %
  %
  %
  %
  %
    One of the primary challenges of system identification is 
    determining how much data is necessary to adequately fit a model.
    Non-asymptotic characterizations of the performance of system
    identification methods provide this knowledge. Such 
    characterizations are available for several algorithms
    performing open-loop identification. Often times, however,
    data is collected in closed-loop. Application of open-loop
    identification methods to closed-loop data can result
    in biased estimates. One method used by subspace
    identification techniques to eliminate these biases involves
    first fitting a long-horizon autoregressive model, then
    performing model reduction. The asymptotic behavior of
    such algorithms is well characterized, but the non-asymptotic
    behavior is not. This work provides a non-asymptotic
    characterization of one particular variant of these
    algorithms. More specifically, we provide non-asymptotic 
    upper bounds on the generalization error 
    of the produced model, as well as high probability
    bounds on the difference between the produced
    model and the finite horizon Kalman Filter.
\end{abstract}

\section{Introduction}

One of the first steps in the control design process is to obtain a model
for the system of interest. In cases where knowledge of the system is nonexistent
or incomplete, models must be identified from input/output data.
This process can be viewed as a learning problem in which models are
optimized in order to give the best fit for the 
data \cite{ljung1999system}.
%
%
The quality of the model can be assessed via 1) generalization
error, which measures how well the model fits unseen data, and 2)
model error, which measures how far the identified model is from the
``true'' model. (In many cases, analysis of model error is an
idealization, since the real system falls out side the class of models
studied.)

%

System identification can be viewed as a learning
problem, but correlations in the data lead to several challenges. 
Typical
machine learning problems assume that the data are
independent \cite{mohri2018foundations}. Using independence, learning
theory provides \emph{non-asymptotic} bounds on the
generalization error obtained from finite amounts of data. 
In contrast, the data from system identification are correlated due to 1) internal system
dynamics, 2) temporal correlations in the inputs, and 3) feedback from
the outputs to the inputs.
The result is that most traditional analyses of system identification
methods focus on \emph{asymptotic} bounds, which can only guarantee
low generalization error in the limit of infinite data
\cite{ljung1999system}. There has, however, been recent efforts
to provide non-asymptotic bounds on the performance
of system identification methods.
%


Most work on non-asymptotic system identification focuses on open-loop
problems. Early works give 
non-asymptotic 
analyses for the identification of
transfer functions \cite{goldenshluger1998nonparametric} and
autoregressive models for
systems with no measured inputs \cite{goldenshluger2001nonasymptotic}. 
Recently, several works have provided non-asymptotic analyses of various
open-loop system identification problems for 
stable linear time invariant
systems. The work in
\cite{tu2017non}
bounds the error 
in fitting a finite impulse response with 
inputs chosen optimally for identification. The case where the
the state is measured directly and the inputs are independent and identically distributed (iid)
is studied in \cite{simchowitz2018learning}.
The work in \cite{oymak2018non},\cite{sarkar2019finite} bounds the error in identifying a finite impulse
response and obtaining a realization from data generated with iid inputs.

The work in \cite{hazan2017learning,hazan2018spectral} provides a
non-asymptotic method for output error identification of linear
models. Unlike the works mentioned above, the data could be collected
in closed-loop. However, these works utilize the non-probabilistic
framework of online optimization
\cite{hazan2016introduction,cesa2006prediction}, and are not
directly comparable to the work on generalization
bounds. Additionally, the models identified in these works are
restricted to stable systems.

For control design, quantifying the error between the identified model
and the ``true'' model is useful. 
An overview of methods for control design from identified models is
provided in \cite{gevers2005identification}.
Recent approaches to robust control synthesis
that take the uncertainty of identified models into account
are analyzed in \cite{tu2017non} and \cite{dean2017sample}.


As discussed above, the recent works on non-asymptotic identification
have focused almost exclusively on open-loop identification methods. 
However, for many systems, 
the plant is impossible to isolate from
its controller or is unstable in open-loop. Furthermore, identification is most
successful when performed in circumstances that closely match the desired
application, which often includes a feedback 
controller \cite{gevers2005identification}. This 
drives the study of methods that are effective with closed-loop data.

The task of developing identification methods
that work on closed-loop data is nontrivial, as the correlation between past output noise and future inputs
produces a bias in model estimates for many identification methods.
This is particularly troublesome for subspace approaches \cite{forssel1999closed}. 
In \cite{ljung1996subspace}, it is 
demonstrated how subspace algorithms
may be applied to closed-loop data by 
fitting high order vector autoregressive models with exogenous inputs
(VARX models). The work of
\cite{jansson2003subspace} proposed a subspace
technique which used the VARX parameter estimates 
to recover the Kalman Filter.
This helped to develop algorithms such as 
the well known predictor based subspace identification (PBSID) algorithm
\cite{chiuso2007role}. 
For summaries on the advancements of subspace approaches
for closed-loop identification, see \cite {qin2006overview} and \cite{van2013closed}. 

Our contribution is to analyze an algorithm for system identification
in which a VARX model is fit, followed by balanced model reduction. 
Such an approach has been described in \cite{dahlen2004relation}, 
and it was shown that its asymptotic properties match those of a familiar
subspace method known as canonical correlation analysis.
The primary difference of our analysis from prior
non-asymptotic system identification characterizations 
is that we allow the presence of a feedback controller. 


The paper is organized as follows. In Section \ref{sec:prob}, we present the algorithm,
precisely define the problem, and provide the main result, a 
non-asymptotic bound on the generalization
error of the produced model. The proof of this result is available in Section \ref{sec:proof}. Section \ref{sec:discussion}
presents a related result regarding the high probability 
bounds on the $\Hinf$ norm of the error system from the identified model to the finite horizon
Kalman Filter, and highlights several practical considerations of the bounds.
The bound in expectation is then 
demonstrated on a randomly generated system in Section \ref{sec:simulation}.


\section{Problem and Results}
\label{sec:prob}
We now describe the details of the problem, and present
the generalization error bound obtained. Subsection \ref{ss:notation}
summarizes the notation used throughout the paper. In Subsection
\ref{ss:problem}, we highlight the details and assumptions of the
closed-loop system. The algorithm to be analyzed is presented
in \ref{ss:redar}, along with the main result, a non-asymptotic 
bound on the generalization error of the obtained model.  
\subsection{Notation and Terminology}
\label{ss:notation}
Random variables are denoted using bold symbols. 
The expected value of a
random variable, $\x$, is
denoted by $\E[\x]$, while the probability of an event $S$ is given by
$\P(S)$. 

The Euclidean norm of a vector, $x$, is denoted by $\| x\|$. The
Frobenius norm of a matrix, $G$, is denoted by $\|G\|_F$, while its
induced $2$ norm is denoted by $\|G\|$. The minimal eigenvalue of a
symmetric matrix, $X$, is denoted by $\lambda_{\min}(X)$. 

The \emph{power} of a stationary process, $\y_t$, is defined by
$\|\y\|_{\Pow}^2 = \E[\y_t^\top \y_t]$.

The forward shift operator is denoted by $q$, i.e. $q \x_t = \x_{t+1}$.
If $G(q)$ is a time-domain operator defined in terms of shifts, we will identify it with its
corresponding transfer matrix, $G(z)$. The $\Hinf$ norm of a transfer
matrix, $G(z)$, is denoted by $\| G \|_{\infty}$. 
The notation $\x_{i:j}$ represents the sequence starting from $\x_i$ and up to, but not including $\x_j$.

\subsection{Problem Setup}
\label{ss:problem}
    Consider a linear time-invariant (LTI) system in innovation form:
    \begin{subequations}
    \label{eq:innovations}
    \begin{align}
        \x_{t+1} &= A \x_t + B \u_t + K \e_t \\
        \y_t & = C \x_t + \e_t.
    \end{align}
    \end{subequations}
    Here $\x_t \in \R^{n_x}$ is the state, $\u_t \in \R^{n_u}$ is the known input, $\e_t \in \R^{n_y}$ is Gaussian white noise,
    and $\y_t \in \R^{n_y}$ is the measurement. For compact notation, we set $\z_t
    = \begin{bmatrix}
    \u_t^\top & 
    \y_t^\top
    \end{bmatrix}^\top \in \R^{n_z}$ 
    and $\d_t = \z_{t-p:t}$.
    For later analysis, we have assumed that the system is strictly proper
    in the known inputs, $\u_t$.

    We will assume that $\u_t$ can be represented as a linear feedback
    with excitatory noise:
    \begin{subequations}
    \label{eq:controlSystem}
    \begin{align}
    \s_{t+1} &= A^F\s_t + B_1^F \y_t + B_2^F \v_t \\
    \u_t &= C^F \s_t + D_1^F \y_t + D_2^F\v_t. 
    \end{align}
    \end{subequations}
    Here $\v_t \in \R^{n_u}$ is identity covariance Gaussian white noise
    which is independent of $\e_t$. $\s_t \in \R^{n_s}$ is the state of the
    controller. A summary of the system is shown in Fig. \ref{fig:overallLoop}.

    \begin{figure}
        \centering
        \begin{tikzpicture}[thick,>=latex]
        \node[draw] (P) {
            $\left[
            \begin{array}{c|cc}
                A & B & K \\
                \hline 
                0 & I & 0\\
                C & 0 & I
            \end{array}
            \right]$
            };

            \node[draw,below=3 em of P](C) {
            $\left[
                \begin{array}{c|cc}
                A^F & B_1^F & B_2^F \\
                \hline
                C^F & D_1^F & D_2^F
                \end{array}
            \right]$
            };

            \node[draw,left=2em of P](Eye) {$
            \begin{bmatrix}
                I & 0 \\
                0 & I
            \end{bmatrix}
            $};

            \node[draw,left=2em of Eye](D) {
            $
            \begin{bmatrix}
                z^{-1} I \\
                z^{-2} I \\
                \vdots \\
                z^{-p} I
            \end{bmatrix}
            $
            };

            \draw[->] (D.west) -- node[above] {$\d_t$} ($(D.west)+(-2em,0)$);
            \draw[->] (Eye.west) -- node[above] {$\z_t$} (D.east);
            \draw[->] let \p1=($(P.west)+(0,1em)$),\p2=(Eye.east) in (\p1)
            -- node[above] {$\u_t$} (\x2,\y1);
            \draw[->] let \p1=($(P.west)+(0,-1em)$),\p2=(Eye.east) in (\p1)
            -- node[above] {$\mathbf{y}_t$} (\x2,\y1);
            \draw[->] let
            \p1=($(P.west)+(0,-1em)$),\p2=(Eye.east),\p3=($(C.west)+(0,.5em)$),
            \p4=($(.5*\x1,\y1)+(.5*\x2,0)$),\p5=($(\x4,.5*\y4)+(0,.5*\y3)$),
            \p6=(\x5-2em,\y5),\p7=(\x6,\y3) in
            (\p4) -- (\p5) -- (\p6) -- (\p7) -- node[above] {$\mathbf{y}_t$} (\p3);
            \draw[->] let
            \p1=($(C.west)+(0,-1em)$),\p2=($(C.west)+(-2em,-1em)$) in
            (\p2) -- node[above]{$\v_t$}(\p1);

            \draw[->] let \p1=($(P.east)+(2em,1em)$),
            \p2=($(P.east)+(0,1em)$) in (\p1)-- node[above] {$\e_t$}
            (\p2);

            \draw[->] let \p1=(C.east), \p2=($(P.east)+(2em,-1em)$),
            \p3=(\x2,\y1),\p4=($(P.east)+(0,\y2)$) in
            (\p1) -- (\p3) -- (\p2)-- node[above] {$\u_t$}(\p4);
        \end{tikzpicture}
        \caption{\label{fig:overallLoop} The overall system. }
    \end{figure}
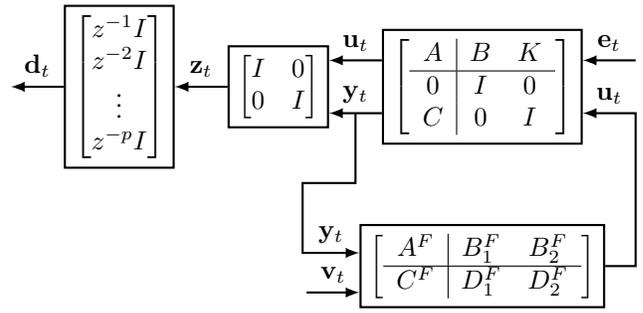

    The closed-loop system is assumed to be stable. This implies
    that the signal power, $\|\z\|_\Pow$, is finite. Additionally, we will
    assume that the joint covariance of the noise is positive definite:
    \begin{equation*}
    \E\left[
        \begin{bmatrix}
        \e_t \\
        D_2^F \v_t
    \end{bmatrix}
        \begin{bmatrix}
        \e_t \\
        D_2^F \v_t
    \end{bmatrix}^\top
    \right]
    =\begin{bmatrix}
    \Psi & 0 \\
    0 & \Omega
    \end{bmatrix} = \Gamma \succ 0.
    \end{equation*}
    This ensures that identifiability conditions hold, as in traditional
    system identification \cite{ljung1999system}. 
    Note that we do not assume that the open-loop system is stable. 

    The finite horizon Kalman Filter represents the output estimate 
    for \eqref{eq:innovations} provided the $p$ previous time steps as 
    \begin{align}
        \nonumber
        \yStar =  \E[\y_t| \d_t],
    \end{align}
    where the notation follows that mentioned previously;
the sequence $x_{i:j}$ does not include $x_j$. This indicates
that the finite horizon Kalman Filter estimate depends only upon data
collected at times $k$ with $t-p \leq k < t$.

    The estimate is a linear function of $\d_t$. We define $G_{OPT}$ as the transformation relating the two:
    \begin{align}
        \label{eq:filterMatrix}
        \yStar=G_{OPT} \d_t.
    \end{align}
    We also define the operator $H^{OPT}(q)$ such that
    \begin{align}
        \nonumber
        \yStar=H^{OPT}(q) \z_t.
    \end{align}
    The steady state Kalman Filter operator
    $H^\star(q)$ is given as  
    \begin{align}
        \nonumber
        \y^\star_{t|-\infty:t} = H^\star(q) \z_t.
    \end{align}
    Due to the special form of the innovations model, we can write the steady state Kalman Filter as 
    \begin{align*}
        \x_{t+1} &= (A-KC)\x_t + B \u_t + K \y_t \\
        \yKF &= C \x_t,
    \end{align*}
    and the associated expected squared error, $\E[ \| \y_t - \yKF\|^2]$, is $\|\e\|_\Pow^2$.

    \subsection{The REDAR Algorithm and its Prediction Error}
\label{ss:redar}
    \begin{algorithm}
    \begin{algorithmic}[1]
        \State Given signals $\u_{1-p:T+1}$, $\y_{1-p:T+1}$, VARX order $p>0$,
        a regularization paramater  $\alpha >0$, 
        and a reduction error $\phi >0$
        \State \label{line:d_def}
        Let $\z_t
    = \begin{bmatrix}
    \u_t^\top & 
    \y_t^\top
    \end{bmatrix}^\top$ and $\d_t = \z_{t-p:t}$
        \State Solve the VARX identification problem $\G_T = \argmin_{G}
        \sum_{t=1}^{T} \| \y_t - G \d_t \|^2 + \alpha \|G\|_F^2$
        \State Construct a state-space operator,  $\H^A$, such that $\H^A(q)
        \z_t = \G_T \d_t$
        \State Apply balanced model reduction to find $\H^R$ such that $\|
        \H^A- \H^R\|_{\infty} \le \phi$.
        \State Compute estimates $(\hat \A,\hat \B,\hat \C,\hat \D,\hat \K)$ by
        $\left[\begin{array}{c|cc}
                \hat \A - \hat \K \hat \C
                &  \hat \B & \hat \K \\
                \hline
                \hat \C & \hat \D & 0
            \end{array}
            \right]
            =                                      \H^R$
    \end{algorithmic}
    \caption{\label{alg:REDAR} The REDuced AutoRegressive (REDAR) algorithm }
    \end{algorithm}

    The method of this paper is termed the  REDAR (pronounced ``reader'')
    algorithm. See Alg.~\ref{alg:REDAR}. Here $\H^A$  is the system
    corresponding to the least-squares model, while $\H^R$ is the result of
    balanced model reduction subject to $\Hinf$ error tolerance $\phi$. See \cite{zhou1996robust} for a description balanced reduction with limited error tolerance.
    Our final predictor is given by $\hat
    \y_t = \H^R(q) \z_t$. Additionally,
    given the state-space realization of $\H^R$, all of the parameters of
    the innovation form model, \eqref{eq:innovations}, can be estimated. 

    The general scheme of the REDAR
    algorithm has been proposed in closed-loop system identification
    literature \cite{van2013closed}, \cite{dahlen2004relation}. However, its finite-sample
    behavior has not been characterized. Our main result gives such
    a characterization:

    \begin{theorem}
        \label{thm:mainResult}
        Suppose there exists $L > 0$ and $\rho < 1$ such that
        for all $|z| \geq \rho$, $\| H^\star(z)\| \leq L$.
        Then for all $T \geq T_0$,
        \begin{align}
        \nonumber
        \E[\| & \y_t-\hat{\y}_t \|^2 ] \\
        \nonumber
            & \leq \| \e\|_\Pow^2+ \frac{L \rho
        ^{p+1}}{1-\rho}\|\z\|_\Pow + 2 \phi \|\z\|_\Pow^2 
        + \frac{2kp}{\sqrt{T}} \|\z\|_\Pow^2,
        \end{align}
        where $k$ and $T_0$ are constants that depends upon $n_u$, $n_y$, $p$, $\alpha$, the $\Hinf$ norm of the closed
        loop system, and $\lambda_{min}(\Gamma)$.
    \end{theorem}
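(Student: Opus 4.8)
The plan is to express the one-step prediction error as a sum of four contributions---the innovations of the true system, the finite-horizon truncation of the steady-state Kalman filter, the finite-sample error of the vector autoregression, and the model-reduction error---and to bound each separately. Throughout, the trajectory on which $\hat\y_t=\H^R(q)\z_t$ is evaluated is regarded as independent of the data defining $\G_T,\H^A,\H^R$ (this is the sense of ``generalization error''), and $\H^R$ is a well-defined stable operator since $\H^A$ has finite memory and balanced truncation preserves stability. I would first split $\y_t-\hat\y_t=(\y_t-\H^A(q)\z_t)+(\H^A-\H^R)(q)\z_t$, handling the second term via $\|\H^A-\H^R\|_\infty\le\phi$ and a Cauchy--Schwarz cross term, which reduces matters to $\E[\|\y_t-\H^A(q)\z_t\|^2]=\E[\|\y_t-\G_T\d_t\|^2]$.

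For this, the key structural step is an orthogonal decomposition. The innovation $\e_t=\y_t-\yKF$ is uncorrelated with every function of $\{\z_s:s<t\}$ (with past outputs by definition, and with past inputs because the controller is driven by $\e_s,\v_s$ for $s<t$); in particular $\e_t\perp\d_t$, so $\yStar=\E[\y_t\mid\d_t]=\E[\yKF\mid\d_t]$. Expanding $\y_t-\G_T\d_t=\e_t+(\yKF-\yStar)+(\yStar-\G_T\d_t)$, the last two summands are functions of $\{\z_s:s<t\}$ (conditionally on the data), hence orthogonal to $\e_t$, and the middle one is orthogonal to the last through $\yStar=\E[\yKF\mid\d_t]$; all cross terms vanish, giving
\begin{align}
\nonumber \E[\|\y_t-\G_T\d_t\|^2]=\|\e\|_\Pow^2+\E[\|\yKF-\yStar\|^2]+\E[\|(G_{OPT}-\G_T)\d_t\|^2].
\end{align}
The truncation term is bounded by noting that $\|\yKF-\yStar\|_{L^2}$ is at most the $L^2$ norm of the tail $\sum_{k>p}M_k\z_{t-k}$ of the Markov-parameter expansion of $H^\star$; a Cauchy estimate using $\|H^\star(z)\|\le L$ on $|z|\ge\rho$ gives $\|M_k\|\le L\rho^k$, hence $\|\yKF-\yStar\|_{L^2}\le\frac{L\rho^{p+1}}{1-\rho}\|\z\|_\Pow$, and squaring (with this factor below one in the regime where the bound is informative) yields the $\frac{L\rho^{p+1}}{1-\rho}\|\z\|_\Pow$ term. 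The reduction contribution $2\|\y_t-\H^A(q)\z_t\|_{L^2}\phi\|\z\|_\Pow+\phi^2\|\z\|_\Pow^2$ collapses to $2\phi\|\z\|_\Pow^2$ once one observes $\|\y_t-\H^A(q)\z_t\|_{L^2}\lesssim\|\z\|_\Pow$ (its dominant part is $\|\e\|_\Pow^2\le\|\y\|_\Pow^2\le\|\z\|_\Pow^2$, the rest being small).

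The crux is the estimation term. Since $\E[\|\d_t\|^2]=p\|\z\|_\Pow^2$, it suffices to show $\E[\|\G_T-G_{OPT}\|^2]\le 2k/\sqrt T$ for $T\ge T_0$. The regularized normal equations give $\G_T-G_{OPT}=\big(E_T-\tfrac{\alpha}{T}G_{OPT}\big)\big(\hat\Sigma_T+\tfrac{\alpha}{T}I\big)^{-1}$, with $\hat\Sigma_T=\tfrac1T\sum_t\d_t\d_t^\top$ and $E_T=\tfrac1T\sum_t(\y_t-G_{OPT}\d_t)\d_t^\top$. One then needs: (i) $\lambda_{\min}(\hat\Sigma_T)\ge\tfrac12\lambda_{\min}(\Sigma)$ with high probability for $T\ge T_0$, by concentrating the correlated sequence $\d_t$ about $\Sigma=\E[\d_t\d_t^\top]$ and using $\Sigma\succeq c\,\lambda_{\min}(\Gamma)I$ (the excitation $D_2^F\v_t$ and the innovation $\e_t$ enter $\z_{t-1}\subseteq\d_t$ through an invertible feedthrough, so $\Gamma$ lower-bounds $\Sigma$); (ii) $\|E_T\|\lesssim\sqrt{p\,n_z/T}$ with high probability, via a self-normalized martingale (or blocking/mixing) bound, the summands of $E_T$ being orthogonal to $\d_t$ by construction of $G_{OPT}$; and (iii) control of the low-probability complement, for which the regularizer supplies the crude bound $\|\G_T\|\le\tfrac{T}{\alpha}\|\tfrac1T\sum_t\y_t\d_t^\top\|$. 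The feedback is the genuine obstacle here: past innovations are correlated with future regressors, so neither (i) nor (ii) can exploit independence of the rows, and both must be carried out with a martingale/mixing argument adapted to the closed-loop dynamics; tracking how its $\Hinf$ norm controls the mixing rate---and hence $T_0$ and $k$, along with $n_u,n_y,p,\alpha$ and $\lambda_{\min}(\Gamma)$---is the delicate part.
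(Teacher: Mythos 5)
Your overall architecture coincides with the paper's: the same four-way split into noise, finite-memory truncation, finite-data estimation, and model-reduction error (your orthogonality argument is Lemma \ref{lem:decomposition}), the same Cauchy-estimate bound $\|M_k\|\le L\rho^k$ on the tail of $H^\star$ (Lemma \ref{lem:kfTail}), the same $\phi$ bound for balanced truncation, and the same normal-equation identity $\G_T-G_{OPT}=(E_T-\tfrac{\alpha}{T}G_{OPT})(\Q_T+\tfrac{\alpha}{T}I)^{-1}$. The gap is in the estimation term, which is where the entire non-asymptotic content of the theorem lives: you reduce everything to concentration statements (i)--(ii) for the correlated closed-loop regressors, declare that they ``must be carried out with a martingale/mixing argument adapted to the closed-loop dynamics,'' and stop there. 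That unproved step is exactly what determines $k$ and $T_0$ and their dependence on the closed-loop $\Hinf$ norm and $\lambda_{\min}(\Gamma)$, so the proof is incomplete at its central point. Moreover, the route you gesture at has a technical snag: the residual $\y_t-G_{OPT}\d_t$ is not a martingale difference with respect to the data filtration (it contains the truncation error $\yKF-\yStar$ and a conditional-mean error that are correlated across time and with regressors at other times), so a self-normalized martingale bound does not apply off the shelf, and a generic mixing argument would still require quantitative mixing constants that you never tie to the system.

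The paper closes this gap without any martingale or mixing machinery by exploiting Gaussianity of the closed loop: the stacked vector $\Z$ is Gaussian, generated from white noise by the stable closed-loop map $J$, its covariance satisfies $\|R\|\le\|J\|_\infty^2$, and every entry of $\Q_T$ and $\N_T$ is a quadratic form $\Z^\top S\Z$ with $\|S\|\le 1$; a direct moment-generating-function computation (Lemma \ref{lem:matrixConcentration}) then yields elementwise exponential concentration of $\Delta\Q$ and $\Delta\N$ with constants depending only on $\|J\|_\infty$ (Lemmas \ref{lem:elemConcentration}--\ref{lem:allElements}). These are converted to an operator-norm perturbation bound via the matrix inversion lemma (Lemma \ref{lem:normBound}), combined with the two inverse bounds $\|(\Q_T+\tfrac{\alpha}{T}I)^{-1}\|\le 1/(\xi-c_2\delta-\tfrac{\alpha}{T})$ for small $\delta$ and $\le T/\alpha$ always, and then the expectation is obtained by integrating the tail, $\E[\|\G_T-G_{OPT}\|^2]=\int_0^\infty\P(\|\G_T-G_{OPT}\|^2>\epsilon)\,d\epsilon\le k/\sqrt{T}$ (Lemma \ref{lem:ARXerror}). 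Your item (iii) names the right ingredient (the ridge crude bound controls the large-deviation regime), but the integration--and hence the explicit $k$ and $T_0$--is never carried out. To complete your argument you would need to supply a quantitative concentration inequality for the correlated regressors (the Gaussian quadratic-form route is the natural one here, and avoids the martingale issue entirely) and then perform this tail-integration step.
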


\section{Proof of Theorem \ref{thm:mainResult}}
The proof of Theorem~\ref{thm:mainResult} has several stages. 
In Subsection~\ref{ss:decomp}, the expected squared prediction error is
decomposed into terms due to 1) noise, 2) finite autoregressive order, 3) model
reduction, and 4) a limited amount of data. The error due to finite
autoregressive order is bounded in Subsection~\ref{subsec:finitemodel}. In
order to bound the errors due to limited data, some non-asymptotic
convergence results are derived in Subsection~\ref{subsec:convergence}. These
results are used to bound the error due to limited data in
Subsection~\ref{subsec:finitedata}. Finally, the errors due to model
reduction are bounded in Subsection~\ref{ss:reduction}. 

\label{sec:proof}

\subsection{Decomposition}
\label{ss:decomp}
The expected squared prediction error of Alg.
\ref{alg:REDAR} is now decomposed into the following components:
the optimal prediction error given the true model, two terms resulting from the limited model complexity
determined by the parameters $p$ and $\phi$, and a component dependent upon the limited amount of data. 
\begin{lemma}
    \label{lem:decomposition}
    let $\y_t^A = \G_T \d_t$ be the output of the VARX model.
    Then the prediction error of the REDAR algorithm can be decomposed as
    \begin{multline}
        \nonumber
        \E[\| \y_t-\hat{\y}_t\|^2] \leq \|\e\|_\Pow^2 +
        \E[\|\yKF - \yStar\|^2]\\
         + 2\E[\|\yStar - \y_t^A\|^2]+2\E[\|\y_t^A - \hat{\y}_t\|^2].
    \end{multline}
\end{lemma}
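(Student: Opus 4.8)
The plan is to split the prediction residual into four physically meaningful pieces and exploit two elementary orthogonality facts. Set $\e_t = \y_t - \yKF$, the steady-state innovation, and write $\y_t - \hat{\y}_t = \e_t + (\yKF - \yStar) + (\yStar - \y_t^A) + (\y_t^A - \hat{\y}_t)$. The two facts I would use are: (i) $\e_t$ is uncorrelated with every random variable measurable with respect to $\{\u_s,\y_s : s<t\}$, which is the defining property of the innovations representation; and (ii) since $\yStar = \E[\y_t\mid\d_t]$, the residual $\y_t - \yStar$ is orthogonal (in $L^2$) to every function of $\d_t$, in particular to $\yStar$ itself and to $\y_t^A = \G_T \d_t$. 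To use (i) and (ii) I would condition on the training sample (which fixes $\G_T$ and $\H^R$) and take the test index beyond the training horizon, so that $\yKF$, $\yStar = H^{OPT}(q)\z_t$, $\y_t^A$, and $\hat{\y}_t$ are all measurable with respect to the strict past and the test trajectory is an independent stationary copy.

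Two nested Pythagorean steps then give an exact identity: because $\yStar - \y_t^A$ is a function of $\d_t$, fact (ii) gives $\E\|\y_t - \y_t^A\|^2 = \E\|\y_t - \yStar\|^2 + \E\|\yStar - \y_t^A\|^2$; because $\yKF - \yStar$ is past-measurable, fact (i) gives $\E\|\y_t - \yStar\|^2 = \|\e\|_\Pow^2 + \E\|\yKF - \yStar\|^2$. It remains to fold in the model-reduction piece: expanding the square of $\y_t - \hat{\y}_t = (\y_t - \y_t^A) + (\y_t^A - \hat{\y}_t)$, the cross term $2\,\E[(\y_t - \y_t^A)^\top(\y_t^A - \hat{\y}_t)]$ decomposes along $\e_t + (\yKF - \yStar) + (\yStar - \y_t^A)$; the $\e_t$-part vanishes by (i), the $(\yKF - \yStar)$-part equals $-2\,\E[(\yKF - \yStar)^\top(\hat{\y}_t - \y_t^A)]$ (again using (ii) to drop $\y_t^A$), and the $(\yStar - \y_t^A)$-part is handled by Young's inequality, which is what produces the factor of two on $\E\|\yStar - \y_t^A\|^2$ and $\E\|\y_t^A - \hat{\y}_t\|^2$. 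This gives essentially the claimed bound; the one leftover term, $-2\,\E[(\yKF - \yStar)^\top(\hat{\y}_t - \y_t^A)]$, I would bound by Cauchy--Schwarz and absorb, since it is a product of the autoregressive-truncation tail (which controls $\E\|\yKF - \yStar\|^2$) and the reduction error (with $\E\|\y_t^A - \hat{\y}_t\|^2 \le \phi^2\|\z\|_\Pow^2$), hence of strictly higher order than either of the two terms it would perturb.

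The step I expect to be the main obstacle is exactly this last cross term. It is nonzero precisely because $\H^R$ is a reduced-order, hence generically infinite-memory, approximation of the finite-memory operator $\H^A$, so $\hat{\y}_t = \H^R(q)\z_t$ depends on the entire past of $\z$ rather than on $\d_t$ alone; were $\hat{\y}_t$ a function of $\d_t$ the whole decomposition would be an exact Pythagorean identity. One also has to keep the training and test trajectories cleanly separated so that (i) and (ii) can be invoked without circularity. Everything else --- the $\|a+b+c\|^2$ expansions and the Young and Cauchy--Schwarz estimates --- is routine.
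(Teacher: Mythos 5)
Your four-way split, the two orthogonality facts, the conditioning on the training sample, and the Young-inequality step all mirror the paper's argument, and you have correctly located the one delicate point: since $\H^R$ is a reduced-order state-space (hence generically infinite-memory) system, $\hat{\y}_t$ is not a function of $\d_t$ alone, so the cross term $\E[(\yKF-\yStar)^\top(\y_t^A-\hat{\y}_t)]$ is not annihilated by projecting onto $\d_t$. The gap is in how you dispose of it. The lemma is a fixed, non-asymptotic inequality with coefficients $1,1,2,2$ on the four terms; there is no slack into which a Cauchy--Schwarz estimate of this term can be ``absorbed,'' and ``strictly higher order'' has no meaning inside a statement in which nothing is being sent to zero. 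Concretely, Cauchy--Schwarz followed by the arithmetic--geometric mean inequality turns your leftover term into $\E[\|\yKF-\yStar\|^2]+\E[\|\y_t^A-\hat{\y}_t\|^2]$, so what your argument actually delivers is a bound with coefficients $1,2,2,3$, not the stated one; and no reweighting of the Cauchy--Schwarz/Young steps can recover coefficient $1$ on $\E[\|\yKF-\yStar\|^2]$ and $2$ on $\E[\|\y_t^A-\hat{\y}_t\|^2]$ simultaneously, because the latter coefficient is already consumed by the Young step you use on the $(\yStar-\y_t^A)$ part.

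For comparison, the paper groups terms in a different order: it expands around $\yKF$ (the $\e_t$ cross term vanishes by iterated expectation), then expands $\E[\|\yKF-\hat{\y}_t\|^2]$ around $\yStar$ and asserts that $\E[(\yKF-\yStar)^\top(\yStar-\hat{\y}_t)]=0$, again by iterated expectation, and only afterwards applies $\|a+b\|^2\le 2\|a\|^2+2\|b\|^2$ to $\E[\|\yStar-\hat{\y}_t\|^2]$ with $a=\yStar-\y_t^A$ and $b=\y_t^A-\hat{\y}_t$; this ordering is exactly what produces the constants $1,1,2,2$. Writing $\yStar-\hat{\y}_t=(\yStar-\y_t^A)+(\y_t^A-\hat{\y}_t)$ shows that the quantity the paper declares zero contains precisely the cross term you flag (the $(\yStar-\y_t^A)$ part is indeed zero by the tower property), so your skepticism is aimed at the right spot. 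But to prove the lemma as stated you must either genuinely justify that this cross term vanishes---the route the paper takes with its iterated-expectation claim---or restate the lemma with weaker constants; your proposal as written does neither, so it does not establish the stated inequality.
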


\begin{proof}
    \begin{align*}
    \MoveEqLeft
        \E[\| \y_t- \hat{\y}_t\|^2] = \E[\| \y_t-\yKF\|^2] \\
         +2\E  &  [  (\yKF-\hat{\y}_t)^\top (\y_t-\yKF)]
          +\E[\|\yKF-\hat{\y}_t\|^2].
    \end{align*}
    The first term on the right of the above expression is $\|\e\|_\Pow^2$.
    The second term can be seen to
    be zero by iterated expectation.
    The third term may be expanded as 
    \begin{align*}
    \MoveEqLeft
        \E[ \| \yKF-\hat{\y}_t\|^2]
        = \E[\|\yKF - \yStar\|^2] \\
        +2\E & [(\yKF-\yStar)^\top(\yStar - \hat{\y}_t)] \\ 
        + \E & [\| \yStar - \hat{\y}_t\|^2].
    \end{align*}
    
    Iterated expectation may be used to show that
    the second term on the right of the above
    equation evaluates to zero. 
   
   Now perform the following decomposition.   
   \begin{align}
       \nonumber
       \E[\| & \yStar -  \hat \y_t\|^2] = \E[\| \yStar - \y_t^A +
       \y_t^A- \hat \y_t \|^2] \\
       \nonumber
         &\le 2\left(\E[\| \yStar - \y^A_t\|^2] + \E[\| \y_t^A - \hat
         \y_t\|^2]
              \right), 
    \end{align}
    where the inequality follows from application of 
    the Cauchy Schwarz and triangle inequalities. 
\end{proof}

\subsection{Finite Model Order Error}
\label{subsec:finitemodel}
Here, we bound the term arising from Lemma
\ref{lem:decomposition} that results from the
finite model order:
\begin{align}
    \label{eq:finiteOrder}
    \E[\|\yKF - \yStar\|^2].
\end{align}

Recall that $H^\star(q)$ is the Kalman filter operator. 
Note that $H^\star(q)$ can be written as
\begin{align}
    \nonumber
    H^\star(q)=C \sum_{i=1}^\infty \tilde{A}^{i-1}[ B \quad K] q^{-i},
\end{align}
where $\tilde{A}=A-KC$. Let $H^{\TKF}$ be the truncation of
$H^\star(q)$ to $p$ terms:
\begin{align}
    \nonumber
    H^{\TKF}(q)=C \sum_{i=1}^{p} \tilde{A}^{i-1} [B \quad K] q^{-i}.
\end{align}
Then the difference between these two systems is
\begin{align}
  \nonumber
  H^{\Tail}(q)= &
        (H^{\star}-H^{\TKF})(q) \\
        \nonumber
        & = \sum_{i=p+1}^{\infty} C \tilde{A}^{i-1} [B
            \quad  K]q^{-i}.
\end{align}
To simplify notation, let $H_i=C \tilde{A}^{i-1} [B \quad K]$. 

Note that 
\begin{align}
    \nonumber
    \E[\| \yKF - \yStar\|^2 ] \leq \E[\| \yKF - \y_t^\TKF \|^2].
\end{align}
We therefore opt to bound the term on the right
hand side of the above equation. This may be 
written as
\begin{align}
    \nonumber
    \E[\|\yKF - \y_t^\TKF \|^2] = 
    \E[\|H^{\Tail}(q) \z_t\|^2].
\end{align}

For any operator $H$,
\begin{align*}
    \E[\|H(q) \x_t\|^2] \leq \|H\|_\infty^2 \| \x\|_\Pow^2.
\end{align*}

Thus we have that 
\begin{align}
    \label{eq:tailPower}
    \E[\|H^{\Tail}(q) \z_t\|^2] \leq
    \| H^\Tail \|_\infty^2 \|\z\|_\Pow^2.
\end{align}

\begin{lemma}
  \label{lem:kfTail}
(modification of \cite{goldenshluger2001nonasymptotic}, Lemma 1). Assume that there are constants $\rho <1$ and $L>0$ such that the
Kalman filter satisfies:
$\|H^\star(z)\|_2 \le L$ for all $|z| \ge \rho$. Then the coefficients of
$H^\star$ satisfy
\begin{equation}
  \nonumber
  \| H_i \|_2 \le L \rho^i \textrm{ for } i =1,2,\ldots
\end{equation}
and the tail is bounded as 
\begin{equation}
  \nonumber
  \|H^\Tail \|_\infty \le \frac{L\rho^{p+1}}{1-\rho}.
\end{equation}
\end{lemma}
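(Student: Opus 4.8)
The plan is to recognize the statement as a Cauchy estimate for the Taylor coefficients of a bounded analytic matrix-valued function. First I would pass to the variable $w = z^{-1}$ and set $\tilde{H}(w) = H^\star(w^{-1})$. Since $H^\star$ is the strictly proper rational transfer matrix $C(zI-\tilde A)^{-1}[B \quad K] = \sum_{i=1}^\infty H_i z^{-i}$, the function $\tilde{H}$ has the power series $\tilde{H}(w) = \sum_{i=1}^\infty H_i w^i$, so $H_i$ is precisely its $i$-th Taylor coefficient about $w = 0$ (with $\tilde{H}(0) = 0$). The hypothesis that $\|H^\star(z)\|_2 \le L$ for all $|z| \ge \rho$ says exactly that $\|\tilde{H}(w)\|_2 \le L$ for all $|w| \le 1/\rho$; in particular $H^\star$ has no pole in $\{|z| \ge \rho\}$, so $\tilde{H}$ is analytic on an open disk strictly containing $\{|w| \le 1/\rho\}$.

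Then I would apply the Cauchy integral representation of the Taylor coefficients,
\[
  H_i = \frac{1}{2\pi \mathrm{i}} \oint_{|w| = 1/\rho} \frac{\tilde{H}(w)}{w^{i+1}}\, dw ,
\]
and bound its induced $2$-norm by pulling the norm inside the contour integral (triangle inequality) and applying the standard length$\times$maximum estimate:
\[
  \|H_i\|_2 \le \frac{1}{2\pi}\,\frac{2\pi}{\rho}\,\max_{|w| = 1/\rho}\frac{\|\tilde{H}(w)\|_2}{|w|^{i+1}} \le \frac{1}{\rho}\, L\, \rho^{i+1} = L\rho^i ,
\]
which is the first claim. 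If one prefers not to assume $\tilde{H}$ extends past $|w| = 1/\rho$, the same estimate on $|w| = r$ for $r < 1/\rho$ gives $\|H_i\|_2 \le L r^{-i}$, and one lets $r \uparrow 1/\rho$.

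For the tail, observe that $H^{\Tail} = H^\star - H^{\TKF}$ inherits all poles of $H^\star$, which lie in $\{|z| < \rho\} \subset \{|z| < 1\}$, so $H^{\Tail}$ is analytic and bounded on $\{|z| \ge 1\}$; by the maximum-modulus principle applied to $w \mapsto H^{\Tail}(w^{-1})$ on the closed unit disk, $\|H^{\Tail}\|_\infty$ is attained on $|z| = 1$. On that circle the coefficient bound and a geometric series give
\[
  \|H^{\Tail}(z)\|_2 = \Big\| \sum_{i=p+1}^\infty H_i z^{-i} \Big\|_2 \le \sum_{i=p+1}^\infty \|H_i\|_2 \le L \sum_{i=p+1}^\infty \rho^i = \frac{L\rho^{p+1}}{1-\rho} ,
\]
and taking the supremum over $|z| = 1$ yields $\|H^{\Tail}\|_\infty \le \dfrac{L\rho^{p+1}}{1-\rho}$.

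This is essentially the classical Cauchy-estimate lemma of \cite{goldenshluger2001nonasymptotic}, now for a matrix-valued function, and nothing here is delicate. The only points worth stating carefully are (i) the translation of the frequency-domain hypothesis into analyticity together with the uniform bound $L$ for $\tilde{H}$ on $\{|w| \le 1/\rho\}$, and (ii) that the induced $2$-norm may be moved through the contour integral; I expect the write-up to be short.
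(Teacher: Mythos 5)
Your proposal is correct and follows essentially the same route as the paper's proof: both pass to the variable $w=z^{-1}$, extract $H_i$ via the Cauchy integral formula over the circle of radius $\rho^{-1}$, bound the coefficients by the length-times-maximum estimate to get $\|H_i\|_2\le L\rho^i$, and then sum the geometric tail to bound $\|H^{\Tail}\|_\infty$. Your write-up is, if anything, slightly more careful than the paper's (explicitly handling analyticity up to the contour via $r\uparrow 1/\rho$ and justifying that the $\Hinf$ norm is controlled on the unit circle), but the argument is the same.
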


\ifnum \value{num}=0
The proof of this lemma is omitted, as it is quite
similar to the proof provided in \cite{goldenshluger2001nonasymptotic}.
\else
\begin{proof}

By substituting $\frac{1}{z}$ into the expression for $H^\star(z)$, the following
is obtained. 
\begin{align}
    \nonumber
    H^\star(z^{-1})=\sum_{i=1}^{\infty} H_i z^i.
\end{align}

Let $\gamma$ be the counter-clockwise contour around $0$ of radius
$\rho^{-1}$.  Then for any constant, $a$, we have
\begin{equation}
  \nonumber
  \frac{1}{2\pi j}\oint_{\gamma} \frac{a}{z^k} dz=\begin{cases}
    a & \textrm{ if } k = 1 \\
    0 & \textrm{ if } k \ne 1
    \end{cases}
  \end{equation}
  Then the filter coefficients $H_i$ may be written as
\begin{align}
    \nonumber
    H_i=\frac{1}{2 \pi j} \oint_\gamma \frac{H^\star(\frac{1}{z})}{z^{i+1}} dz.
\end{align}
By setting $z=\rho^{-1}e^{j\theta}$,  the above
expression becomes 
\begin{align*}
    H_i=\frac{1}{2 \pi j} \int_0^{2\pi} \frac{H^\star(\frac{1}{\rho^{-1} e^{j
        \theta}})}{(\rho^{-1} \e^{j \theta})^{i+1}} j \rho^{-1} e^{j
        \theta} d \theta.
\end{align*}

Now the norm of $H_i$ may be bounded by
application of the triangle inequality and
homogeneity.
    \begin{align*}
            \|H_i\|_2 \leq \frac{1}{2 \pi} \int_0^{2 \pi} \frac{\|H^\star(\frac{1}{
                \rho^{-1} e^{j \theta}}) \|_2}{\rho^{-i}} d \theta.
    \end{align*}

The assumption implies that $\|H^\star(z^{-1})\|_2 \le L$ for all $|z|\le
\rho^{-1}$. Then $\|H^\star(\rho e^{-j\theta})\|_2 \leq L$ so $\|H_i\| \leq L \rho^i$. 
        \begin{align*}
            \|H^\star-H^{\TKF}\|_\infty \leq \sum_{i=p+1}^{\infty} L \rho^i = \frac{L
            \rho^{p+1}}{1-\rho}.
        \end{align*}
    \end{proof}
\fi
Combining the result of Lemma \ref{lem:kfTail} with \eqref{eq:tailPower}, we have a bound for \eqref{eq:finiteOrder}.
\subsection{Convergence of Empirical Means}
\label{subsec:convergence}
The least squares problem in Alg. \ref{alg:REDAR} 
converges asymptotically to some steady state value. 
This subsection takes the first step in bounding
the distance from the asymptotic value when a finite
amount of data is available. In particular,
probability bounds are provided for the difference
of individual components of the least squares solution
from their asymptotic value.

Recall the definition of $\d_t$  and the corresponding least-squares
estimator, $\G_T$, from Alg.~\ref{alg:REDAR}.
The least-squares solution can be expressed as 
\begin{align*}
  \Q_T &= \frac{1}{T} \sum_{t=1}^{T} \d_t\d_t^\top, \;
  \N_T = \frac{1}{T}
        \sum_{t=1}^{T} \y_t\d_t^\top,
        \\
  \G_T &= \N_T    
     \left (\Q_T+\frac{\alpha}{T}I \right)^{-1}.
\end{align*}

The optimal solution defined in \eqref{eq:filterMatrix} may 
be expressed as follows
\begin{align*}
    Q=\E[\Q_T], \;
    N=\E[\N_T], \;
    G_{OPT}=N Q^{-1}.
\end{align*}

We will denote $\Q_T-Q$ as $\Delta \Q$ and $\N_T-N$ as $\Delta \N$.
The focus of this subsection will be to derive a bound on the
probability that any element of $\Delta \Q$ or $\Delta \N$ exceed
a given magnitude. This will then be used in the following subsection
to bound the finite data error.

Let $J(q)$ be the closed-loop operator that maps 
\begin{align*}
\z_t = J(q) \begin{bmatrix}
    \Psi^{-1/2}\e_t \\
    \v_t
    \end{bmatrix}.
\end{align*}
Here, we have re-normalized the innovation error signal so that the
  input to $J$ is Gaussian white noise with identity covariance. 

Define
$
  \Z = \begin{bmatrix}
    \z_{1-p}^\top &
    \hdots &
    \z_{T-1}^\top &
    \z_{T}^\top
    \end{bmatrix}^\top.
$
  Let $R = \E\left[\Z\Z^\top\right]$, and $r_{t} = \E[\z_t \z_{0}^\top]$ be the autocorrelation
    function. Then $R_{t,\tau} = r_{t-\tau}$.  Let $\Phi_{\z}(e^{j\omega})$
    be the Fourier transform of $r_t$, which is the
    power spectral density. Note that $\Phi_{\z}(e^{j\omega}) =
    J(e^{j\omega})J(e^{j\omega})^*$, and so
    $\|\Phi_{\z}(e^{j\omega})\| \le \| J\|_{\infty}^2$. 
 
  \begin{lemma}
    The covariance, $R$, satisfies $\|R\| \le \| J\|_{\infty}^2$.
  \end{lemma}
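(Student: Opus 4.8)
The plan is to bound the operator norm of the Toeplitz-type covariance matrix $R$ by the supremum of its symbol, the power spectral density $\Phi_{\z}$, and then use the already-established bound $\|\Phi_{\z}(e^{j\omega})\| \le \|J\|_\infty^2$. The key fact is the standard result that a block-Toeplitz matrix with symbol $\Phi_{\z}$ has spectral norm bounded by $\operatorname{ess\,sup}_{\omega}\|\Phi_{\z}(e^{j\omega})\|$. Concretely, I would take an arbitrary vector $a = \begin{bmatrix} a_{1-p}^\top & \cdots & a_T^\top\end{bmatrix}^\top$ and show $a^\top R a \le \|J\|_\infty^2 \|a\|^2$.

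First I would write $a^\top R a = \sum_{t,\tau} a_t^\top r_{t-\tau} a_\tau$, and substitute the inverse-Fourier representation $r_{t-\tau} = \frac{1}{2\pi}\int_{-\pi}^{\pi} \Phi_{\z}(e^{j\omega}) e^{j\omega(t-\tau)}\, d\omega$. Interchanging the (finite) sum with the integral gives
\begin{align*}
a^\top R a = \frac{1}{2\pi}\int_{-\pi}^{\pi} \left(\sum_t a_t e^{j\omega t}\right)^* \Phi_{\z}(e^{j\omega}) \left(\sum_\tau a_\tau e^{j\omega \tau}\right) d\omega.
\end{align*}
Writing $\hat a(\omega) = \sum_t a_t e^{j\omega t}$, the integrand is $\hat a(\omega)^* \Phi_{\z}(e^{j\omega}) \hat a(\omega) \le \|\Phi_{\z}(e^{j\omega})\| \, \|\hat a(\omega)\|^2 \le \|J\|_\infty^2 \|\hat a(\omega)\|^2$ since $\Phi_{\z} = JJ^*$ is positive semidefinite. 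Then Parseval's identity gives $\frac{1}{2\pi}\int_{-\pi}^{\pi} \|\hat a(\omega)\|^2\, d\omega = \sum_t \|a_t\|^2 = \|a\|^2$, so $a^\top R a \le \|J\|_\infty^2 \|a\|^2$. Since $R$ is symmetric positive semidefinite, this yields $\|R\| \le \|J\|_\infty^2$.

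I would also need the identity $\Phi_{\z}(e^{j\omega}) = J(e^{j\omega})J(e^{j\omega})^*$, which the excerpt already asserts from the fact that $\z_t = J(q)\w_t$ with $\w_t$ identity-covariance white noise; this is the standard relation between the transfer function of a linear system driven by white noise and the output power spectral density. The main obstacle, such as it is, is purely technical: justifying the interchange of the finite sum and the integral (trivial, since the sum is finite and $\Phi_{\z}$ is integrable) and making sure the Fourier/Parseval bookkeeping is consistent with the paper's conventions for $r_t$, $\Phi_{\z}$, and the forward shift operator $q$. There is no deep difficulty — this is a clean consequence of the symbol bound for Toeplitz operators — so the proof should be short.
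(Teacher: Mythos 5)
Your proposal is correct and matches the paper's argument essentially verbatim: the paper also expresses $v^\top R v$ as $\frac{1}{2\pi}\int_0^{2\pi}\hat v(e^{j\omega})^*\Phi_{\z}(e^{j\omega})\hat v(e^{j\omega})\,d\omega$ via Plancharel's theorem, bounds the integrand using $\|\Phi_{\z}(e^{j\omega})\|\le\|J\|_\infty^2$, and concludes by maximizing over unit vectors. No substantive differences.
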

  \begin{proof}

    Let $v_t$ be a sequence such that $v_t=0$ for $t<1-p$ and
    $t>T$ and let $\hat v(e^{j\omega})$ be its Fourier transform. Let
    $v$ be the vector formed by stacking the components for
    $t=1-p,\ldots,T$.   Then 
  \begin{align*}
    v^\top R v &= \sum_{t,\tau=1-p}^{T} v_t^\top r_{t-\tau} v_{\tau} \\
    &= \frac{1}{2\pi} \int_0^{2\pi} \hat v(e^{j\omega})^*
      \Phi_{\z}(e^{j\omega}) \hat v(e^{j\omega}) d\omega \\
    &\le \|J\|_{\infty}^2 v^\top v .
  \end{align*}
  The second equality follows from Plancharel's theorem, while the
  inequality uses the bound on $\|\Phi_{\z}(e^{j\omega})\|_2$, followed by
  Plancharel's theorem again. The lemma now follows by maximizing over
  unit vectors, $v$. 
\end{proof}

\begin{lemma}
  \label{lem:matrixConcentration}
  For all symmetric $S$ and all $\delta >0$, the following bound holds
  for all $T \ge p$.
  \begin{multline*}
    \P\left(\Z^\top S\Z > \Tr(RS) + \delta T\right) \le \\
    \exp\left(
      -T \min\left\{
        \frac{\delta^2}{32\|S\|^2 \|J\|_{\infty}^4},\frac{\delta}{8
          \|S\| \|J\|_{\infty}^2}
        \right\}
      \right).
  \end{multline*}
\end{lemma}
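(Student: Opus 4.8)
The plan is to read $\Z^{\top}S\Z$ as a quadratic form in a Gaussian vector and bound it with a Chernoff argument on its moment generating function. Because the closed-loop system is stable and driven by the Gaussian white noise $(\Psi^{-1/2}\e_t,\v_t)$, the stationary process $\z_t = J(q)[\,\Psi^{-1/2}\e_t^{\top}\ \ \v_t^{\top}\,]^{\top}$ is jointly Gaussian, hence $\Z\sim\Nor(0,R)$. Writing $\Z = R^{1/2}\mathbf{g}$ with $\mathbf{g}\sim\Nor(0,I)$ turns the left side into $\mathbf{g}^{\top}M\mathbf{g}$ with $M:=R^{1/2}SR^{1/2}$ symmetric and $\Tr(M)=\Tr(RS)$, so the claim is a one-sided tail bound for the Gaussian chaos $\mathbf{g}^{\top}M\mathbf{g}$ around its mean $\Tr(M)$ at fluctuation level $\delta T$.

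First I would prove the chaos bound. Diagonalizing $M$ with eigenvalues $\lambda_i$, for $0\le\theta<1/(2\|M\|)$ one has $\E[e^{\theta\mathbf{g}^{\top}M\mathbf{g}}]=\prod_i(1-2\theta\lambda_i)^{-1/2}$, and the elementary inequality $-\log(1-x)-x\le x^{2}/(2(1-x))$ for $x<1$ gives $\log\E[e^{\theta(\mathbf{g}^{\top}M\mathbf{g}-\Tr M)}]\le\sum_i\theta^{2}\lambda_i^{2}/(1-2\theta\lambda_i)$. Restricting to $\theta\le 1/(4\|M\|)$ forces $1-2\theta\lambda_i\ge 1/2$, so the log-MGF is at most $2\theta^{2}\|M\|_F^{2}$, and a Chernoff bound gives $\P(\mathbf{g}^{\top}M\mathbf{g}>\Tr M+\delta T)\le\exp(-\theta\delta T+2\theta^{2}\|M\|_F^{2})$ on this range. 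Optimizing: the interior point $\theta=\delta T/(4\|M\|_F^{2})$ is admissible when $\delta T\le\|M\|_F^{2}/\|M\|$ and yields $\exp(-(\delta T)^{2}/(8\|M\|_F^{2}))$; otherwise $\theta=1/(4\|M\|)$ yields, using $\|M\|_F^{2}<\delta T\|M\|$, the bound $\exp(-\delta T/(8\|M\|))$. Hence $\P(\Z^{\top}S\Z>\Tr(RS)+\delta T)\le\exp(-\min\{(\delta T)^{2}/(8\|M\|_F^{2}),\ \delta T/(8\|M\|)\})$.

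Next I would replace $\|M\|$ and $\|M\|_F^{2}$ by the quantities in the statement. Submultiplicativity and the preceding lemma give $\|M\|\le\|R\|\,\|S\|\le\|J\|_{\infty}^{2}\|S\|$, producing the linear term $\delta/(8\|S\|\|J\|_{\infty}^{2})$ after pulling $T$ out of the minimum. For the Frobenius term, $\|M\|_F^{2}=\Tr(RSRS)$, and Cauchy--Schwarz for the trace inner product together with submultiplicativity gives $\Tr(RSRS)\le\|R\|_F\,\|SRS\|_F\le\|S\|^{2}\|R\|_F^{2}=\|S\|^{2}\Tr(R^{2})$. Since $R$ is block-Toeplitz, $\Tr(R^{2})=\sum_{t,\tau}\|r_{t-\tau}\|_F^{2}\le(T+p)\sum_k\|r_k\|_F^{2}=\tfrac{T+p}{2\pi}\int_0^{2\pi}\|\Phi_{\z}(e^{j\omega})\|_F^{2}\,d\omega$ by Parseval, and $\|\Phi_{\z}\|\le\|J\|_{\infty}^{2}$ (from the previous lemma's proof) bounds the integrand; with $T\ge p$ this yields $\|M\|_F^{2}\le cT\|S\|^{2}\|J\|_{\infty}^{4}$, so the first exponent becomes, on collecting constants, $-T\delta^{2}/(32\|S\|^{2}\|J\|_{\infty}^{4})$. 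Combining the two regimes gives the stated inequality.

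I expect the Frobenius-norm step to be the main obstacle. A Gaussian chaos fluctuates on the scale of $\|M\|_F$, not $\|M\|$, so a careless bound inserts the ambient dimension $(T+p)n_z$ and degrades the rate from $T\delta^{2}$ to $\delta^{2}$; the point is that the effective dimension $\Tr(R^{2})/\|R\|^{2}$ is only of order $T$, which one sees from $\Tr(R^{2})\le\|R\|\Tr(R)$ with $\|R\|\le\|J\|_{\infty}^{2}$ and $\Tr(R)$ linear in $T$ by stationarity. Squeezing this estimate down to the clean constant (and accounting for the residual $n_z$-dependence, which is absorbed into the constants of the theorem) is the delicate part. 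A secondary, more cosmetic issue is the exact Gaussianity of $\Z$: it follows from stationarity of $\z_t$, but if one instead works with a finite window of past noise, $\Z=\mathcal{J}\boldsymbol{\xi}$ for a truncated-Toeplitz matrix $\mathcal{J}$ with $\|\mathcal{J}\|\le\|J\|_{\infty}$ and iid standard Gaussian $\boldsymbol{\xi}$, and the same argument applies with $M=\mathcal{J}^{\top}S\mathcal{J}$ up to a geometrically small transient term.
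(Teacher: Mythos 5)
Your proposal is essentially the same argument as the paper's: both read $\Z^\top S\Z$ as a quadratic form in a Gaussian vector (the paper uses $\Z\sim\Nor(0,R)$ implicitly when it evaluates the moment generating function as $\det(I-\eta SR)^{-1/2}$, so your concern about Gaussianity is a non-issue -- the process is stationary and linearly driven by Gaussian white noise), and both run a Chernoff bound with the same two-regime optimization giving a sub-Gaussian and a sub-exponential branch. The one substantive difference is how the variance proxy is controlled. The paper expands $\log\det(I-\eta SR)$ as $\sum_i\sum_{k\ge 2}(\eta\lambda_i)^k/k$, bounds \emph{every} eigenvalue of $SR$ by $\|S\|\,\|R\|\le\|S\|\,\|J\|_\infty^2$, and multiplies by the number of summands (taken there to be $T+p\le 2T$); in other words it uses precisely the ``number of eigenvalues times $\|SR\|^2$'' estimate that you set out to avoid, and that is what yields the dimension-free constants $32$ and $8$ in the statement. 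Your Hanson--Wright route with $\|M\|_F^2=\Tr(RSRS)\le\|S\|^2\Tr(R^2)$ and the Parseval bound $\Tr(R^2)\le\frac{T+p}{2\pi}\int_0^{2\pi}\|\Phi_{\z}(e^{j\omega})\|_F^2\,d\omega$ is sound, but the last integrand is only controlled as $\|\Phi_{\z}\|_F^2\le n_z\|J\|_\infty^4$, so you obtain $\|M\|_F^2\le 2Tn_z\|S\|^2\|J\|_\infty^4$ and hence a sub-Gaussian exponent of the form $\delta^2/(16\,n_z\|S\|^2\|J\|_\infty^4)$. The fallback $\Tr(R^2)\le\|R\|\Tr(R)$ has the same defect, since $\Tr(r_0)=\|\z\|_{\Pow}^2$ also carries an $n_z$. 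So as written you prove the lemma with $32$ replaced by an $n_z$-dependent constant; this is harmless downstream (the constants $k$, $T_0$ in Theorem~1 already depend on $n_u,n_y$), but it is strictly weaker than the stated inequality, and the Frobenius route cannot remove the $n_z$ for a general symmetric $S$. Either state the constant-degraded version explicitly, or switch to the paper's eigenvalue-count bound inside the $\log\det$ expansion to match the statement.

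One smaller repair: the elementary inequality $-\log(1-x)-x\le x^2/(2(1-x))$ fails for $x<0$ (try $x=-1$), so for the negative eigenvalues of $M$ you should instead use $-\log(1-x)-x\le x^2$, valid for $|x|\le 1/2$, which is exactly what your restriction $\theta\le 1/(4\|M\|)$ guarantees; the final bound $\log\E[e^{\theta(\mathbf{g}^\top M\mathbf{g}-\Tr M)}]\le 2\theta^2\|M\|_F^2$ and the subsequent two-regime optimization then go through unchanged.
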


\begin{proof}
Note that $SR$ and $R^{1/2}SR^{1/2}$ have the same eigenvalues, so
  all of the eigenvalues of $SR$ are real. For all $\eta >0$ such that
  $\eta R^{1/2}SR^{1/2} \prec I$, 
  Markov's inequality implies that 
  \begin{align}
    \nonumber
    \MoveEqLeft
    \P\left( 
    \Z^\top S \Z > \Tr(SR) +T \delta
    \right) \\
    \nonumber
    & \le e^{-\frac{\eta}{2} (\Tr(SR)+T\delta)} \E\left[
              e^{\frac{\eta}{2} \Z^\top S \Z}
              \right] \\
              & = e^{\left( -\frac{1}{2} \left(
              \eta \Tr(SR)  + \eta T\delta + \log\det(I-\eta SR)
              \right) \right)}.
              \label{eq:chiSquaredExponent}
  \end{align}
  The equality follows from direct calculation.

  Now we will examine the exponent from
  \eqref{eq:chiSquaredExponent}. 
  Let $\lambda_i$ be the eigenvalues of $SR$ for $i=1,\ldots,T+p$. As discussed above,
  these are real and furthermore,
  \begin{equation*}
  |\lambda_i| \le \|S R\| \le \|S\| \|R \| \le \| S\| \|J\|_{\infty}^2.
\end{equation*}
Using the bounds on the eigenvalues, the exponent can be bounded as
follows. 
\begin{align*}
  \MoveEqLeft
  \eta \Tr(SR) + \eta T\delta + \log\det(I- \eta SR)
  \\
  &=
 \sum_{i=1}^{T+p}\left( \eta \lambda_i + \log(1-\eta \lambda_i)\right)
      + \eta T \delta \\
  &= \eta T \delta -\sum_{i=1}^{T+p} \sum_{k=2}^{\infty} \frac{(\eta
    \lambda_i)^k}{k} \\
  & \ge \eta T \delta - 2T \sum_{k=2}^{\infty}\frac{(\eta \|S\|
    \|J\|_{\infty}^2)^k}{k} \\
  & \ge \eta T \delta - 2T \sum_{k=2}^{\infty} (\eta \|S\|
    \|J\|_{\infty}^2)^k \\
  &= \eta T \delta - 2T \frac{(\eta
    \|S\| \|J\|_{\infty}^2)^2}{1-\eta \|S\| \|J\|_{\infty}^2}.
\end{align*}

Now say that $\eta \le 1/(2\| S\|_2 \|J\|_\infty^2)$. Then
the above expression can be bounded  below by
\begin{equation}
  \label{eq:removedDenominator}
  T \left(
    \eta \delta - 4 \eta^2 (\| S\|_2 \|J\|^2_{\infty})^2
  \right).
\end{equation}
Now we will see how to choose $\eta$ to ensure that
\eqref{eq:removedDenominator} is positive. For simple notation, let $a
= 8 (\| S\|_2 \|J\|_{\infty}^2)^2$ and let $b = 1/(2\| S\|_2
\|J\|_\infty^2)$. Then $\eta$ can be chosen by solving the following
maximization problem:
\begin{subequations}
  \begin{align*}
    & \max && \eta \delta - \frac{a\eta^2 }{2} \\
    & \textrm{subject to} && 0 \le \eta \le b
  \end{align*}
\end{subequations}
The optimal solution is given by $\eta = \min\{\delta/a,b\}$. If $\eta
= \delta / a$, then the optimal value is given by $\delta^2/(2a)$. If
$\eta = b$, then we must have that $\delta \ge ab$ and so the optimal
value satisfies
\begin{equation*}
  b\delta - ab^2/2 \ge b\delta - b\delta /2 = b\delta / 2.
\end{equation*}
Thus, we get the final bound on \eqref{eq:removedDenominator} as
\begin{equation*}
  T \min\left\{
    \frac{\delta^2}{16\|S\|^2 \|J\|_{\infty}^4},\frac{\delta}{4
      \|S\| \|J\|_{\infty}^2}
    \right\}.
\end{equation*}
The lemma follows by plugging this into the exponential bound on the
probability from \eqref{eq:chiSquaredExponent}.
\end{proof}

   Note that every entry of $\N_T$ and
   $\Q_T$ is of the form
   \begin{equation*}
    \frac{1}{T}\sum_{t=1}^{T} (\z_{t-k})_i (\z_{t-\ell})_j
  \end{equation*}
   for some $i,j \in \{1,\ldots,n_z\}$ and $k,\ell \in
   \{0,\ldots,p\}$. Recall that $r_t$ is the autocorrelation function
   of $\z$. The following lemma shows that these empirical means
   converge to the corresponding autocorrelation values exponentially
   in probability. 
   
    \begin{lemma}
      \label{lem:elemConcentration}
      For all $i,j \in \{1,\ldots,n_z\}$, all $k,\ell \in
      \{0,\ldots,p\}$, and all $T\ge p$, the following bound holds
      \begin{multline*}
         \P\left(
         \left|\frac{1}{T} \sum_{t=1}^{T} (\z_{t-k})_i (\z_{t-\ell})_i -
        (r_{\ell - k})_{ij}\right| > \delta
          \right)  \\
          \le 2 \exp\left(
         -T \min\left\{
           \frac{
              \delta^2}
             {32 \| J\|_{\infty}^4},\frac{\delta}{8 \|J\|_{\infty}^2}
         \right\}
      \right).
      \end{multline*}
    \end{lemma}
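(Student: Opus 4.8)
The plan is to realize the empirical average as a normalized quadratic form $\tfrac{1}{T}\Z^\top S\Z$ in the stacked vector $\Z$ and then invoke Lemma~\ref{lem:matrixConcentration}. All that is needed is a symmetric matrix $S$ with $\|S\|\le 1$ for which $\Z^\top S\Z=\sum_{t=1}^{T}(\z_{t-k})_i(\z_{t-\ell})_j$ and $\Tr(RS)=T(r_{\ell-k})_{ij}$.

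To build $S$, let $e_{(s,m)}\in\R^{(T+p)n_z}$ be the standard basis vector that picks out component $m$ of the time-$s$ block of $\Z$, so $e_{(s,m)}^\top\Z=(\z_s)_m$; since $t\in\{1,\dots,T\}$ and $k,\ell\in\{0,\dots,p\}$, the indices $t-k$ and $t-\ell$ always lie in $\{1-p,\dots,T\}$, hence these selectors are well defined. Set $A=[\,e_{(1-k,i)}\ \cdots\ e_{(T-k,i)}\,]$, $B=[\,e_{(1-\ell,j)}\ \cdots\ e_{(T-\ell,j)}\,]$, and $S=\tfrac12(AB^\top+BA^\top)$. Then $S$ is symmetric and, since $\Z^\top BA^\top\Z$ is a scalar equal to its transpose $\Z^\top AB^\top\Z$, one has $\Z^\top S\Z=\Z^\top AB^\top\Z=\sum_{t=1}^{T}(\z_{t-k})_i(\z_{t-\ell})_j$. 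The columns of $A$ are distinct, hence orthonormal, standard basis vectors, so $\|A\|=1$, and likewise $\|B\|=1$; therefore $\|S\|\le\tfrac12(\|AB^\top\|+\|BA^\top\|)\le\|A\|\,\|B\|=1$. Finally, $\Tr(RS)=\Tr(B^\top RA)$ by symmetry of $R$, and the $(t,t)$ entry of $B^\top RA$ is $e_{(t-\ell,j)}^\top R\,e_{(t-k,i)}=(r_{k-\ell})_{ji}=(r_{\ell-k})_{ij}$, using $R_{s,s'}=r_{s-s'}$ together with $r_{-m}=r_m^\top$; hence $\Tr(RS)=T(r_{\ell-k})_{ij}$.

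With this $S$, Lemma~\ref{lem:matrixConcentration} applies ($S$ is symmetric and $T\ge p$), and after dividing by $T$ and using $\|S\|^2\le\|S\|\le 1$ to weaken the constants it gives
\[ \P\Bigl(\tfrac{1}{T}\sum_{t=1}^{T}(\z_{t-k})_i(\z_{t-\ell})_j-(r_{\ell-k})_{ij}>\delta\Bigr)\le\exp\Bigl(-T\min\bigl\{\tfrac{\delta^2}{32\|J\|_\infty^4},\ \tfrac{\delta}{8\|J\|_\infty^2}\bigr\}\Bigr). \]
Running the same argument with $-S$ in place of $S$ — still symmetric, same norm, and $\Tr(R(-S))=-T(r_{\ell-k})_{ij}$ — bounds the lower tail by the identical quantity, and a union bound over the two one-sided events produces the claimed two-sided inequality with its factor of $2$. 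I do not anticipate a real obstacle here: the only step requiring any thought is assembling $S$ and verifying $\|S\|\le 1$, after which the statement is an immediate corollary of Lemma~\ref{lem:matrixConcentration}.
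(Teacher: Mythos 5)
Your proof is correct and follows essentially the same route as the paper: you write the sum as a quadratic form $\Z^\top S\Z$ with a symmetric $S$ of norm at most $1$ (your $\tfrac12(AB^\top+BA^\top)$ is exactly the paper's $\tfrac12(M+M^\top)$, with orthonormal-column selectors replacing the paper's permutation argument), identify $\Tr(RS)=T(r_{\ell-k})_{ij}$, and apply Lemma~\ref{lem:matrixConcentration} to $S$ and $-S$ with a union bound. No gaps.
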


    \begin{proof}
      We will apply Lemma~\ref{lem:matrixConcentration}. To do so,
      we will express the sums as quadratic forms. Note that
  \begin{equation*}
    \sum_{t=1}^{T} (\z_{t-k})_i (\z_{t-\ell})_j = \Z^\top S \Z = \frac{1}{2}\Z^\top(M+M^\top)\Z,
  \end{equation*}
  where
  \begin{equation*}
    M = \begin{bmatrix}
      0_{(p-k)\times (p-\ell)} & 0_{(p-k)\times T} & 0_{(p-k)\times
        \ell} \\
      0_{T\times (p-\ell)} & I_T & 0_{T\times \ell} \\
      0_{(k\times (p-\ell)} & 0_{k\times T} & 0_{k\times \ell}
    \end{bmatrix} \otimes (e_i e_j^\top).
  \end{equation*}
  Here $e_i,e_j \in\R^{n_z}$ are the canonical unit vectors and the
  subscripts in the matrix on the left denote the dimensions. 

  Note that there are permutation matrices, $P_L$ and $P_R$ such that
  \begin{equation*}
    P_L M P_R = \begin{bmatrix}
      I_T & 0 \\
      0 & 0
      \end{bmatrix}
    \end{equation*}
    for zero matrices of appropriate size.
    Thus $\|M\|_2 = 1$ and so $\|S\| \le 1$, by the triangle
    inequality and homogeneity.

    Furthermore, in this case
    \begin{equation*}
      \Tr(SR) = \E[\Z^\top S\Z] = T(r_{\ell-k})_{ij}.
      \end{equation*}
    
    Since the bound from Lemma~\ref{lem:matrixConcentration} increases
    with respect to $\|S\|$, we can plug in the upper bound of $1$
    to show that
    \begin{multline*}
      \P\left(
          \frac{1}{T} \sum_{t=1}^{T} (\z_{t-k})_i (\z_{t-\ell})_i -
         (r_{\ell - k})_{ij} > \delta
           \right)  \\
           \le \exp\left(
          -T \min\left\{
            \frac{
              \delta^2}
              {32 \| J\|_{\infty}^4},\frac{\delta}{8 \|J\|_{\infty}^2}
         \right\}
      \right).
    \end{multline*}
    
    The probability bound on $-\frac{1}{T} \sum_{t=1}^{T} (\z_{t-k})_i (\z_{t-\ell})_i +
         (r_{\ell - k})_{ij}$ is identical, and follows by
    applying Lemma~\ref{lem:matrixConcentration} to $-S$. The lemma
    now follows from a
    union bound. 
  \end{proof}

    Now note that every element of $\Delta \Q$ and $\Delta \N$ may
    be expressed as 
   \begin{equation*}
    \frac{1}{T}\sum_{t=1}^{T} (\z_{t-k})_i (\z_{t-\ell})_j-(r_{l-k})_{ij}
  \end{equation*}
   for some $i,j \in \{1,\ldots,n_z\}$ and $k,\ell \in
   \{0,\ldots,p\}$. The following lemma uses this fact
   to bound the probability of elementwise deviations of $\Delta \Q$
   and $\Delta \N$ from zero.
  
\begin{lemma}
    \label{lem:allElements}
    For $T \geq p$, $\delta \geq 0$, the probability that any element
    of $\Delta \N$ or $\Delta \Q$ is larger than $\delta$ in magnitude
    satisfies 
    \begin{multline*}
        \P( \underset{i, j}{\max} \{ |\Delta \Q_{ij}| \} > \delta
            \text{ or }  \underset{i, j}{\max} \{ | \Delta \N_{ij} | \} > \delta) \\
        < 2b  \exp \left( -T \min \left \{ \frac{\delta^2}{32
        \|J\|_\infty^4}, \frac{\delta}{8 \|J\|_\infty^2} \right
        \} \right) 
    \end{multline*}
    where
    \begin{align*}
        b= p n_y n_z +\frac{p n_z (p n_z +1)}{2}.
    \end{align*}
\end{lemma}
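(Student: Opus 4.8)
The plan is to derive Lemma~\ref{lem:allElements} from the single-entry concentration estimate of Lemma~\ref{lem:elemConcentration} together with a union bound; the only genuine task is to count, without double-counting, the distinct random variables that make up $\Delta\Q$ and $\Delta\N$.

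First I would invoke the observation recorded just before the lemma statement: every entry of $\Delta\Q$ and of $\Delta\N$ has the form $\frac1T\sum_{t=1}^T(\z_{t-k})_i(\z_{t-\ell})_j-(r_{\ell-k})_{ij}$ for some $i,j\in\{1,\ldots,n_z\}$ and $k,\ell\in\{0,\ldots,p\}$. For $\Delta\Q=\Q_T-Q$ this is immediate: since $\d_t=\z_{t-p:t}$ stacks the $p$ blocks $\z_{t-1},\ldots,\z_{t-p}$, each of dimension $n_z$, the matrix $\Q_T=\frac1T\sum_t\d_t\d_t^\top$ is symmetric of size $pn_z\times pn_z$ and each of its entries is an empirical average of a product of two coordinates of $\z$ at lags in $\{1,\ldots,p\}$. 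For $\Delta\N=\N_T-N$ I would note that the coordinates of $\y_t$ form a subset of the coordinates of $\z_t$, so every entry of $\N_T=\frac1T\sum_t\y_t\d_t^\top$ is again of the stated form, now with one lag equal to $0$ and the other in $\{1,\ldots,p\}$; this matrix has size $n_y\times pn_z$.

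Next I would count. The matrix $\Delta\N$ contributes $n_y\cdot pn_z=pn_yn_z$ entries. Because $\Q_T$, and hence $\Delta\Q$, is symmetric, it contributes only $\frac{pn_z(pn_z+1)}{2}$ distinct entries, namely those on or above the diagonal. Therefore at most $b=pn_yn_z+\frac{pn_z(pn_z+1)}{2}$ distinct scalar quantities need be controlled, and each of them is exactly a quantity to which Lemma~\ref{lem:elemConcentration} applies; recall that the statement of that lemma already carries the factor of $2$ coming from a union over its two one-sided tails.

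Finally, the event $\{\max_{i,j}|\Delta\Q_{ij}|>\delta \text{ or } \max_{i,j}|\Delta\N_{ij}|>\delta\}$ is the union, over these $b$ distinct entries, of the events that a given entry exceeds $\delta$ in magnitude. Applying the union bound and then Lemma~\ref{lem:elemConcentration} to each term (valid for $T\ge p$) bounds this probability by $b$ times $2\exp(-T\min\{\delta^2/(32\|J\|_\infty^4),\delta/(8\|J\|_\infty^2)\})$, which is the claimed inequality. The main obstacle here is nothing more than bookkeeping: recognizing that the entries of $\N_T$ fit the template of Lemma~\ref{lem:elemConcentration} because $\y_t$ is a sub-vector of $\z_t$, and using symmetry of $\Q_T$ so that the union-bound multiplier is exactly $b$ rather than a larger count. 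No probabilistic argument beyond Lemma~\ref{lem:elemConcentration} is needed.
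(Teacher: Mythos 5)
Your proposal is correct and follows essentially the same route as the paper: a union bound over the $b$ distinct entries (with $\Delta\Q$'s symmetry giving $\frac{pn_z(pn_z+1)}{2}$ rather than $(pn_z)^2$, plus $pn_yn_z$ for $\Delta\N$), each controlled by Lemma~\ref{lem:elemConcentration}, whose two-sided bound supplies the factor $2$. Your writeup is in fact slightly more explicit than the paper's about why the entries of $\N_T$ fit the template of Lemma~\ref{lem:elemConcentration} (since $\y_t$ is a sub-vector of $\z_t$), but the argument is the same.
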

\ifnum \value{num}=0
The proof of this lemma is omitted. It follows from a union
bound and application of Lemma \ref{lem:elemConcentration}.
\else
\begin{proof}
    By a union bound, 
    \begin{multline*}
        \P( \underset{i, j}{\text{max}} \{ |\Delta \Q_{ij}| \} > \delta
            \text{ or } \underset{i, j}{\text{max}} \{ | \Delta \N_{ij} | \} > \delta) \\
        \leq (\text{dim}(\Delta \Q)+\text{dim}(\Delta \N)) \P(|\Delta \Q_{11}| > \delta). 
    \end{multline*}
    Where an arbitrary element in $\Delta \Q$ and $\Delta \N$ is represented by
    with $\Delta \Q_{11}$. Noting that $\Delta \Q$ is symmetric, we assign
    \begin{align*}
        b = & \text{dim}(\Delta \N)+\text{dim}(\Delta \Q)\\
           & = p n_y n_z +\frac{p n_z (p n_z +1)}{2}.
    \end{align*}
    The lemma now follows by applying Lemma \ref{lem:elemConcentration} to bound $\P( | \Delta \Q_{11} | > \delta)$.
\end{proof}
\fi

\subsection{Finite Data Error}
\label{subsec:finitedata}
We now use the results from the previous subsection to determine a bound 
for
\begin{align*}
    \E[\|\yStar-\y^A\|^2] \leq
    \E[\| G_{OPT} - \G_T\|^2] \|\z\|_\Pow^2 p.
\end{align*}
As $\|G_{OPT} - \G_T\|^2 \geq 0$ the expected value may be written
\begin{align}
    \nonumber
    \E[\| & \G_T- G_{OPT} \|^2] \\
    = &\int_0^\infty \P [\|\G_T-G_{OPT}\|^2
    > \epsilon] d \epsilon .
    \label{eq:cdfIntegral}
\end{align}
An upper bound on this integral may be computed if, for any $\epsilon \geq 0$, we can bound
 $\P[\| \G_T-G_{OPT} \|^2 > \epsilon]$. To do so, 
define $\delta \geq 0$ such that
\begin{align*}
    |\Delta \N_{ij}| \leq & \delta \; i=1, \dots, n_y \; j=1, \dots, p n_z\\ 
    |\Delta \Q_{ij}| \leq & \delta \; i,j =1, \dots, p n_z .
\end{align*}
We will proceed by bounding $\|\G_T-G_{OPT}\|$ in terms of $\delta$. It will
then be possible to determine a value $\delta \geq 0$ corresponding to all sufficiently 
large $\epsilon$ such that
\begin{align*}
    \nonumber
    |\Delta  \Q_{ij}| \leq \delta \text{ and } |\Delta \N_{ij}| \leq \delta
    & \implies \\
    & \|\G_T-G_{OPT}\|^2 \leq \epsilon.
\end{align*}
Then Lemma \ref{lem:allElements} may be applied to bound the probability that the elementwise
bounds hold. 

The elementwise bounds above provide the following bounds on 
$\| \Delta \N\|$ and $\| \Delta \Q\|$.
\begin{subequations}
    \label{eq:deltaBounds}
    \begin{align}
        \|\Delta \N\| \leq & c_1 \delta \\
        \|\Delta \Q\| \leq & c_2 \delta 
    \end{align}
\end{subequations}
where $c_1=\sqrt{p n_y n_z}$ and $c_2=p n_z$.

To simplify notation in the following analysis, we define
 $\xi=\lambda_{min}(\Gamma) \geq \lambda_{min}(Q) = \|Q^{-1}\|^{-1}$.

\begin{lemma}
    \label{lem:normBound}
    \begin{align*}
        \nonumber
         \|  \G_T -G_{OPT} \| 
         \leq \left(c_3 \delta+ \frac{c_4}{T}\right) \left \|(Q+\Delta \Q+\frac{\alpha}{T}I)^{-1} \right \|
    \end{align*}
    where $c_3 = c_1 + \frac{\|J\|_\infty^2 c_2}{\xi}$
    and $c_4 = \frac{\|J\|_\infty^2 \alpha}{\xi}$.
\end{lemma}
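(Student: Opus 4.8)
The plan is a routine matrix-perturbation argument, in which $\G_T - G_{OPT}$ is written as a single ``perturbation'' matrix multiplied by $(\Q_T + \tfrac{\alpha}{T}I)^{-1}$, and then each factor is bounded using results already in hand. Abbreviate $\tilde\Q := \Q_T + \tfrac{\alpha}{T}I = Q + \Delta\Q + \tfrac{\alpha}{T}I$, so that $\G_T = \N_T\tilde\Q^{-1} = (N + \Delta\N)\tilde\Q^{-1}$. Since $G_{OPT} = N Q^{-1}$ we have $N = G_{OPT}Q$, and expanding $G_{OPT} = G_{OPT}\tilde\Q\tilde\Q^{-1}$ gives the identity
\begin{align}
\nonumber
\G_T - G_{OPT} = \Bigl(\Delta\N - G_{OPT}\bigl(\Delta\Q + \tfrac{\alpha}{T}I\bigr)\Bigr)\tilde\Q^{-1}.
\end{align}
Taking operator norms, then using submultiplicativity, the triangle inequality, and $\|\Delta\Q + \tfrac{\alpha}{T}I\| \le \|\Delta\Q\| + \tfrac{\alpha}{T}$, yields
\begin{align}
\nonumber
\|\G_T - G_{OPT}\| \le \Bigl(\|\Delta\N\| + \|G_{OPT}\|\bigl(\|\Delta\Q\| + \tfrac{\alpha}{T}\bigr)\Bigr)\|\tilde\Q^{-1}\|.
\end{align}

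Next I would substitute the elementwise-to-operator-norm bounds $\|\Delta\N\| \le c_1\delta$ and $\|\Delta\Q\| \le c_2\delta$ from \eqref{eq:deltaBounds}. The only remaining quantity is $\|G_{OPT}\|$. Writing $G_{OPT} = N Q^{-1}$, I would bound $\|Q^{-1}\| = \lambda_{\min}(Q)^{-1} \le \xi^{-1}$ using the relation recorded just before the lemma, and $\|N\| \le \|J\|_\infty^2$. For the latter, note that by stationarity $N = \E[\y_t\d_t^\top]$ is obtained by selecting a subset of rows and columns of the covariance $R = \E[\Z\Z^\top]$; since restricting to a sub-block cannot increase the operator norm, $\|N\| \le \|R\| \le \|J\|_\infty^2$, the last inequality being the covariance bound already established. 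Hence $\|G_{OPT}\| \le \|N\|\,\|Q^{-1}\| \le \|J\|_\infty^2/\xi$.

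Combining these bounds and collecting the terms proportional to $\delta$ and to $1/T$ gives
\begin{align}
\nonumber
\|\G_T - G_{OPT}\| \le \Bigl(\bigl(c_1 + \tfrac{\|J\|_\infty^2 c_2}{\xi}\bigr)\delta + \tfrac{\|J\|_\infty^2\alpha}{\xi T}\Bigr)\|\tilde\Q^{-1}\|,
\end{align}
which is exactly the asserted inequality with $c_3 = c_1 + \|J\|_\infty^2 c_2/\xi$ and $c_4 = \|J\|_\infty^2\alpha/\xi$.

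The one step I expect to require care, rather than pure bookkeeping, is the bound $\|N\| \le \|J\|_\infty^2$: one must identify $N$ as a sub-block of the window covariance $R$ (via stationarity) and invoke the fact that passing to a sub-block does not increase the spectral norm, so that the uniform-in-$T$ bound on $\|R\|$ transfers. Everything else --- the algebraic identity, the norm inequalities, and the tally of the constants $c_1,c_2,c_3,c_4$ --- is routine.
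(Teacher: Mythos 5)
Your proposal is correct and follows essentially the same route as the paper: the same identity $\G_T - G_{OPT} = \bigl(\Delta\N - NQ^{-1}(\Delta\Q + \tfrac{\alpha}{T}I)\bigr)(Q+\Delta\Q+\tfrac{\alpha}{T}I)^{-1}$ (the paper obtains it via the matrix inversion lemma, you by expanding $G_{OPT}\tilde\Q\tilde\Q^{-1}$ directly), the same norm bounds $\|\Delta\N\|\le c_1\delta$, $\|\Delta\Q\|\le c_2\delta$, $\|Q^{-1}\|\le\xi^{-1}$, and the same identification of $N$ as a sub-block of $R$ to get $\|N\|\le\|J\|_\infty^2$.
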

\begin{proof}
    \begin{align*}
        \nonumber
        \G_T & -G_{OPT} 
        = (N + \Delta \N)(Q+\Delta \Q+\frac{\alpha}{T} I)^{-1}-N Q^{-1}
    \end{align*}
    Application of the matrix inversion lemma provides
    \begin{align*}
        \nonumber
        & \G_{T} -G_{OPT} \\
        & = (\Delta \N-N Q^{-1}(\Delta \Q+ \frac{\alpha}{T} I))(Q+\Delta \Q+\frac{\alpha}{T} I)^{-1}.
    \end{align*}
    If we now take the two norm, and apply both the triangle inequality
    and submultiplicativity several times, we get
    \begin{align*}
        \nonumber
        \|& \G_T  -G_{OPT} \| \leq \\
        \nonumber
        & \left(\frac{\|N\| (\|\Delta \Q\|+\frac{\alpha}{T})}{\xi} + \| \Delta \N\| \right)
         \left \|(Q+\Delta \Q+\frac{\alpha}{T}I)^{-1} \right \|.
    \end{align*}
    To bound $\|N\|$ in terms of $\|J\|_\infty$, note that we can write $N$ in terms of $R$ as 
    \begin{align*}
        N = \begin{bmatrix}
        0_{n_y \times p n_z + n_u} & I_{n_y} & 0_{n_y \times (T-2)n_z}
        \end{bmatrix} 
        R 
        \begin{bmatrix}
        I_{p n_z} \\
        0_{p n_z \times (T-1)n_z}
        \end{bmatrix}
    \end{align*}
so 
\begin{align*}
    \|N\| \leq \|R\| \leq \|J\|_\infty^2.
\end{align*}
    The lemma now follows from \eqref{eq:deltaBounds}.
\end{proof}

We know that the following always holds
    \begin{align}
        \label{eq:alldelta}
        \|(Q+\Delta \Q+\frac{\alpha}{T}I)^{-1}\| \leq \frac{T}{\alpha}, 
    \end{align}
as $Q+\Delta \Q= \sum_{k=0}^T \d_t \d_t^T \succeq 0$. A tighter bound is available when $\delta$ is small.

\begin{lemma}
    \label{lem:smalldelta}
    For $\delta < \frac{(
    \xi-\frac{\alpha}{T})}{c_2}$, 
    \begin{align*}
        \|(Q+\Delta \Q+\frac{\alpha}{T} I)^{-1}\|  \leq \frac{1}{\xi - c_2 \delta -\frac{\alpha}{T}}
    \end{align*}
\end{lemma}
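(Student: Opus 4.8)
The plan is to lower bound the smallest eigenvalue of the symmetric matrix $Q+\Delta\Q+\frac{\alpha}{T}I$ by a perturbation argument, and then invert. First I would observe that $\Delta\Q=\Q_T-Q$ is symmetric, being a difference of symmetric matrices, so $\Delta\Q+\frac{\alpha}{T}I$ is symmetric and may be regarded as a perturbation of $Q$; moreover $Q+\Delta\Q=\frac{1}{T}\sum_{t}\d_t\d_t^\top\succeq0$, so $Q+\Delta\Q+\frac{\alpha}{T}I\succ0$ and is invertible.

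Next I would invoke the two ingredients already at hand: the spectral bound $\|\Delta\Q\|\le c_2\delta$ from \eqref{eq:deltaBounds}, and the identifiability bound $\lambda_{\min}(Q)\ge\xi$ (equivalently $\|Q^{-1}\|\le 1/\xi$) encoded in the definition of $\xi$. By Weyl's inequality for symmetric matrices and the triangle inequality,
\begin{align*}
  \lambda_{\min}\!\left(Q+\Delta\Q+\frac{\alpha}{T}I\right)
    &\ge \lambda_{\min}(Q)-\left\|\Delta\Q+\frac{\alpha}{T}I\right\| \\
    &\ge \lambda_{\min}(Q)-\|\Delta\Q\|-\frac{\alpha}{T} \\
    &\ge \xi-c_2\delta-\frac{\alpha}{T}.
\end{align*}

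Finally, the hypothesis $\delta<(\xi-\frac{\alpha}{T})/c_2$ is precisely the condition that makes $\xi-c_2\delta-\frac{\alpha}{T}$ strictly positive, so it is a positive lower bound on the smallest eigenvalue of the symmetric positive definite matrix $Q+\Delta\Q+\frac{\alpha}{T}I$. Since the induced $2$-norm of the inverse of such a matrix equals the reciprocal of its smallest eigenvalue, the claimed bound $\|(Q+\Delta\Q+\frac{\alpha}{T}I)^{-1}\|\le(\xi-c_2\delta-\frac{\alpha}{T})^{-1}$ follows. I do not anticipate a genuine obstacle: the only points needing care are that $\Delta\Q$ is not assumed sign definite, so one must use the symmetric eigenvalue perturbation bound rather than operator monotonicity, and that the stated estimate is mildly conservative---it discards the favorable term $\frac{\alpha}{T}I\succeq0$---but this is the form that combines cleanly with Lemma~\ref{lem:normBound} in the subsequent integration \eqref{eq:cdfIntegral}.
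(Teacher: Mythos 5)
Your proof is correct, but it takes a different route from the paper. The paper factors $(Q+\Delta\Q+\frac{\alpha}{T}I)^{-1}=Q^{-1}(I+\Delta\Q Q^{-1}+\frac{\alpha}{T}Q^{-1})^{-1}$ and expands the second factor in a Neumann (geometric) series, which converges precisely when $\|\Delta\Q Q^{-1}+\frac{\alpha}{T}Q^{-1}\|<1$; bounding that norm by $(c_2\delta+\frac{\alpha}{T})/\xi$ yields both the hypothesis on $\delta$ and, after summing the series with submultiplicativity and the triangle inequality, the bound $1/(\xi-c_2\delta-\frac{\alpha}{T})$. You instead lower bound $\lambda_{\min}(Q+\Delta\Q+\frac{\alpha}{T}I)$ via Weyl's inequality and use the fact that for a symmetric positive definite matrix the induced $2$-norm of the inverse is the reciprocal of the smallest eigenvalue. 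Your argument leans on symmetry of $\Delta\Q$ (which does hold here) and is arguably more direct; as you note, it could even retain the favorable $+\frac{\alpha}{T}$ term and give a slightly sharper constant, whereas the paper's series argument does not require symmetry of the perturbation and needs only $\|Q^{-1}\|\le 1/\xi$ as an operator-norm bound. Both approaches use the same two ingredients ($\|\Delta\Q\|\le c_2\delta$ and $\lambda_{\min}(Q)\ge\xi$) and the same threshold on $\delta$, the paper's to guarantee convergence of the series and yours to keep the eigenvalue lower bound positive, so either proof slots equally well into Lemmas \ref{lem:finddelta} and \ref{lem:ARXerror}.
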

\begin{proof}
    \begin{align}
        \label{eq:denom}
        (Q& +\Delta \Q+\frac{\alpha}{T}I)^{-1} 
        =Q^{-1} (I + \Delta \Q Q^{-1}+ \frac{\alpha}{T} Q^{-1})^{-1}.
    \end{align}
    When 
    \begin{align}
       \label{eq:eigvalBound}
        \|\Delta \Q Q^{-1}+ \frac{\alpha}{T} Q^{-1}\|
        < 1,
    \end{align}
    the term on the right may be replaced by its series expansion: 
    \begin{align*}
        (I  + \Delta \Q Q^{-1}+ \frac{\alpha}{T} Q^{-1})^{-1}
         = \sum_{k=0}^\infty (-\Delta \Q Q^{-1} - \frac{\alpha}{T} Q^{-1})^k.
    \end{align*}
    It is now possible to bound the two norm of \eqref{eq:denom} with 
    submultiplicativity and the triangle inequality.
    \begin{align*}
        \|(Q +\Delta \Q+\frac{\alpha}{T}I)^{-1} \| 
        \leq \frac{1}{\xi - \| \Delta \Q \| -\frac{\alpha}{T}}.
    \end{align*} 
    The condition in \eqref{eq:eigvalBound} can be seen to hold for $\delta < \frac{(
    \xi-\frac{\alpha}{T})}{c_2}$ by noting that
    \begin{align*}
         \| \Delta \Q Q^{-1}+ \frac{\alpha}{T} Q^{-1}\| \leq\frac{c_2 \delta+\frac{\alpha}{T}}{\xi}.
    \end{align*}
\end{proof}
\begin{lemma}
    \label{lem:finddelta}
    Assume $T \geq \max \left \{ \frac{2 \alpha}{\xi}, 1 \right \} $. Let 
    \begin{align*}
        \epsilon_0 =\left(\frac{2 \|J\|_\infty^2 \alpha }{\xi^2 T^{1/4}} \right)^2
    \text{ and }
        \epsilon_1 = \left(\frac{2 \|J\|_\infty^2 T}{\alpha} \right)^2.
    \end{align*}
    
    For any $\epsilon \geq \epsilon_0$, we
    can find $\delta \geq 0$ such that
\begin{align*}
    \nonumber
    |\Delta  \Q_{ij}| \leq \delta \text{ and } |\Delta \N_{ij}| \leq \delta
    & \implies \\
    & \|\G_T-G_{OPT}\|^2 \leq \epsilon,
\end{align*}
    by selecting
\begin{subequations}
\label{eq:deltas}
\begin{align}[left = {\delta = \empheqlbrace}]
    \label{eq:delta1}
    & \frac{(\xi T - \alpha) \sqrt{\epsilon} - c_4}{c_2 T \sqrt{\epsilon} + c_3 T} & \epsilon_0 \leq \epsilon \leq \epsilon_1 \\
    \label{eq:delta2}
    & \frac{\alpha \sqrt{\epsilon} - c_4}{c_3 T}  & \epsilon \geq \epsilon_1
\end{align}    
\end{subequations}
\end{lemma}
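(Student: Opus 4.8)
The plan is to choose $\delta$ in each regime so that the resolvent-norm bound coming from Lemma~\ref{lem:normBound} forces $\|\G_T - G_{OPT}\| \le \sqrt{\epsilon}$, and then square. Combining Lemma~\ref{lem:normBound} with \eqref{eq:deltaBounds}, the hypotheses $|\Delta \Q_{ij}|\le\delta$, $|\Delta \N_{ij}|\le\delta$ already give
\[
  \|\G_T - G_{OPT}\| \;\le\; \Bigl(c_3\delta + \tfrac{c_4}{T}\Bigr)\Bigl\|\bigl(Q+\Delta \Q+\tfrac{\alpha}{T}I\bigr)^{-1}\Bigr\| ,
\]
so everything reduces to controlling the resolvent norm and solving a (rational) inequality for $\delta$. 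Two estimates of the resolvent norm are available, and these produce the two regimes in the statement.

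For $\epsilon \ge \epsilon_1$ I would use the unconditional estimate \eqref{eq:alldelta}, $\|(Q+\Delta \Q+\tfrac{\alpha}{T}I)^{-1}\|\le T/\alpha$, which holds regardless of $\delta$ since $Q+\Delta \Q+\tfrac{\alpha}{T}I \succeq \tfrac{\alpha}{T}I$. This yields $\|\G_T - G_{OPT}\| \le (c_3\delta T + c_4)/\alpha$; setting the right-hand side equal to $\sqrt{\epsilon}$ and solving for $\delta$ gives exactly \eqref{eq:delta2}. The only thing to check is $\delta\ge 0$, i.e.\ $\sqrt{\epsilon}\ge c_4/\alpha = \|J\|_\infty^2/\xi$, and this follows from $\sqrt{\epsilon}\ge\sqrt{\epsilon_1}=2\|J\|_\infty^2 T/\alpha$ together with the hypothesis $T\ge 2\alpha/\xi$.

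For $\epsilon_0 \le \epsilon \le \epsilon_1$ I would instead use the sharper estimate of Lemma~\ref{lem:smalldelta}, $\|(Q+\Delta \Q+\tfrac{\alpha}{T}I)^{-1}\| \le (\xi - c_2\delta - \tfrac{\alpha}{T})^{-1}$, which is legitimate only while $\delta < (\xi-\tfrac{\alpha}{T})/c_2$. Tentatively assuming that admissibility condition, the bound becomes $(c_3\delta + c_4/T)/(\xi - c_2\delta - \alpha/T)$; equating it with $\sqrt{\epsilon}$ and clearing denominators gives a linear equation in $\delta$ whose solution is precisely \eqref{eq:delta1}. Two verifications remain. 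Nonnegativity: since $T\ge 2\alpha/\xi$ we have $\xi T-\alpha \ge \xi T/2 > 0$, so $\delta\ge0$ amounts to $\sqrt{\epsilon}\ge c_4/(\xi T-\alpha)$, which follows from $\sqrt{\epsilon}\ge\sqrt{\epsilon_0}$ after using $\xi T-\alpha\ge\xi T/2$ and $T\ge1$ --- this is exactly why the threshold $\epsilon_0$ carries the $T^{1/4}$ factor. Admissibility: dropping the positive $c_4$ from the numerator and the positive $c_3 T$ from the denominator of \eqref{eq:delta1} shows $\delta < (\xi T-\alpha)\sqrt{\epsilon}/(c_2 T\sqrt{\epsilon}) = (\xi-\tfrac{\alpha}{T})/c_2$, so Lemma~\ref{lem:smalldelta} was indeed applicable. (The same computation shows \eqref{eq:delta1} is admissible for every $\epsilon\ge\epsilon_0$; the cutoff at $\epsilon_1$ is present only because \eqref{eq:delta2} is the larger, hence preferable, choice once $\epsilon$ is large.)

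I expect the one genuinely delicate point to be the circular-looking step in the second regime: the $\delta$ obtained by equating the Lemma~\ref{lem:smalldelta} bound with $\sqrt{\epsilon}$ must itself be small enough that Lemma~\ref{lem:smalldelta} could legitimately be invoked, and the lower threshold $\epsilon_0$ (with its $T^{1/4}$) must be exactly calibrated to keep $\delta$ nonnegative. Both are resolved by the hypotheses $T\ge 2\alpha/\xi$, $T\ge 1$ and the strict positivity of $c_3,c_4$. Everything else is routine algebra, and squaring the resulting $\|\G_T - G_{OPT}\| \le \sqrt{\epsilon}$ finishes the proof.
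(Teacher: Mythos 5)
Your proposal is correct and follows essentially the same route as the paper: combine Lemma~\ref{lem:normBound} with Lemma~\ref{lem:smalldelta} on $[\epsilon_0,\epsilon_1]$ and with \eqref{eq:alldelta} for $\epsilon\ge\epsilon_1$, solve the resulting bound set equal to $\sqrt{\epsilon}$ for $\delta$, and use $T\ge\max\{2\alpha/\xi,1\}$ to verify nonnegativity and the admissibility condition $\delta<(\xi-\alpha/T)/c_2$. Your write-up in fact spells out the algebraic verifications that the paper's proof only asserts.
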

\begin{proof}
The conditions on T along with the definition
of $\epsilon_0$, $\epsilon_1$, and $c_4$ guarantee that \eqref{eq:deltas} is greater than or equal to zero. It can
be seen that expression \eqref{eq:delta1}
is less than $\frac{\xi - \frac{\alpha}{T}}{c_2}$ for
all values of $\epsilon$,
thus the condition in Lemma \ref{lem:smalldelta} 
is satisfied by \eqref{eq:delta1}.
The lemma follows by plugging \eqref{eq:delta1} into
the bound on $\|\G_T-G_{OPT}\|$ resulting from
Lemmas \ref{lem:normBound} and \ref{lem:smalldelta},
and \eqref{eq:delta2} into the bound on
$\|\G_T-G_{OPT}\|$ resulting from
Lemma \ref{lem:normBound} and \eqref{eq:alldelta}. 
\end{proof}

The reason for
the two different expressions for $\delta$ in the
above lemma is that \eqref{eq:alldelta} provides a
tighter bound than Lemma \ref{lem:smalldelta} when
$\delta$ becomes greater than $\frac{\xi - \frac{2 \alpha}{T}}{c_2}$.
Leveraging this advantage is of crucial importance in the following lemma. 

\begin{lemma}
    \label{lem:ARXerror}
    For some $k$ and $T_0$ depending on $n_u$, $n_y$, $p$, $\alpha$, $\|J\|_\infty$, and $\lambda_{min}(\Gamma)$, 
    \begin{align*}
         \E[\|\G_T-G_{OPT}\|^2] \leq \frac{k}{\sqrt{T}}.
    \end{align*}
    for all $T \geq T_0$.
\end{lemma}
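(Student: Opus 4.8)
The plan is to control the expectation through the tail-integral identity \eqref{eq:cdfIntegral}, splitting the $\epsilon$-axis at the two thresholds $\epsilon_0$ and $\epsilon_1$ of Lemma~\ref{lem:finddelta}. On $[0,\epsilon_0]$ I bound $\P[\|\G_T-G_{OPT}\|^2>\epsilon]$ trivially by $1$, so this part contributes at most $\epsilon_0 = 4\|J\|_\infty^4\alpha^2/(\xi^4\sqrt{T})$, which is already of the advertised shape $k_1/\sqrt{T}$ with $k_1=4\|J\|_\infty^4\alpha^2/\xi^4$. Everything then reduces to showing $\int_{\epsilon_0}^{\infty}\P[\|\G_T-G_{OPT}\|^2>\epsilon]\,d\epsilon = o(1/\sqrt{T})$, so that beyond some $T_0$ it is at most $1/\sqrt{T}$ and the claim follows with $k=k_1+1$.

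For $\epsilon\ge\epsilon_0$, Lemma~\ref{lem:finddelta} provides $\delta(\epsilon)\ge0$ such that $\{|\Delta\Q_{ij}|\le\delta(\epsilon)\}\cap\{|\Delta\N_{ij}|\le\delta(\epsilon)\}$ forces $\|\G_T-G_{OPT}\|^2\le\epsilon$; taking contrapositives and applying Lemma~\ref{lem:allElements} gives $\P[\|\G_T-G_{OPT}\|^2>\epsilon]\le 2b\exp(-T\,g(\delta(\epsilon)))$ with $g(\delta)=\min\{\delta^2/(32\|J\|_\infty^4),\ \delta/(8\|J\|_\infty^2)\}$. On $[\epsilon_0,\epsilon_1]$ I use $\delta_1(\epsilon)$ from \eqref{eq:delta1}. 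A one-line derivative check shows $\delta_1$ is increasing in $\epsilon$, and $g$ is nondecreasing, so the integrand is largest at $\epsilon_0$. The key estimate is that $\epsilon_0$ is calibrated exactly so that $\delta_1(\epsilon_0)$ has order $T^{-1/4}$: since $\sqrt{\epsilon_0}=2c_4/(\xi T^{1/4})$, the numerator of \eqref{eq:delta1} at $\epsilon_0$ is $\gtrsim c_4 T^{3/4}$ and the denominator is $\le 2c_3 T$ for large $T$, whence $\delta_1(\epsilon_0)\gtrsim \frac{c_4}{2c_3}T^{-1/4}$; being below the kink $4\|J\|_\infty^2$ of $g$ for large $T$, this gives $g(\delta_1(\epsilon_0))\gtrsim c' T^{-1/2}$ and $T\,g(\delta_1(\epsilon_0))\gtrsim c'\sqrt{T}$. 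As $\epsilon_1=4\|J\|_\infty^4 T^2/\alpha^2$, this part of the integral is at most $2b\,\epsilon_1\exp(-c'\sqrt{T})=O(T^2 e^{-c'\sqrt{T}})=o(1/\sqrt{T})$.

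On $[\epsilon_1,\infty)$ I must switch to the coarser choice $\delta_2(\epsilon)=(\alpha\sqrt{\epsilon}-c_4)/(c_3 T)$ from \eqref{eq:delta2}, which, unlike $\delta_1$, grows without bound (like $\sqrt{\epsilon}/T$) and hence makes the probability bound decay in $\epsilon$. Where $\epsilon$ is large enough that $\delta_2\ge 4\|J\|_\infty^2$, one has $g(\delta_2)=\delta_2/(8\|J\|_\infty^2)$, so $\exp(-T\,g(\delta_2))=\exp(-(\alpha\sqrt{\epsilon}-c_4)/(8\|J\|_\infty^2 c_3))$; substituting $v=\sqrt{\epsilon}$, its integral over $[a,\infty)$ with $a\ge\epsilon_1\sim T^2$ is $O(T e^{-cT})$. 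On the (possibly empty) remaining sub-interval of $[\epsilon_1,\infty)$ where $\delta_2<4\|J\|_\infty^2$, the value $\delta_2$ is bounded below by a positive constant (indeed $\delta_2(\epsilon_1)\to 2\|J\|_\infty^2/c_3$), so $\exp(-T\,g(\delta_2))\le e^{-cT}$ while the sub-interval has length $O(T^2)$, contributing $O(T^2 e^{-cT})$. Thus $\int_{\epsilon_1}^{\infty}\P[\cdots>\epsilon]\,d\epsilon=O(T^2 e^{-cT})$. Collecting the three pieces, $\E[\|\G_T-G_{OPT}\|^2]\le\epsilon_0+o(1/\sqrt{T})\le k/\sqrt{T}$ for all $T\ge T_0$, where $T_0$ is also taken large enough to meet the standing hypotheses $T\ge p$ (Lemma~\ref{lem:allElements}) and $T\ge\max\{2\alpha/\xi,1\}$ (Lemma~\ref{lem:finddelta}); all constants are expressible through $n_u,n_y,p,\alpha,\|J\|_\infty$ and $\lambda_{\min}(\Gamma)$ via $b$, $c_1,\ldots,c_4$, and $\xi$.

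The main obstacle — and the reason for the two-threshold split — is the tension between the length of the integration interval and the decay rate of the probability bound: $\epsilon_1$ grows like $T^2$, so a bound that decays only at a $T$-independent rate would not yield $o(1/\sqrt{T})$; one genuinely needs $T\,g(\delta_1(\epsilon_0))$ to grow, and the sharp lower bound $\delta_1(\epsilon_0)\gtrsim T^{-1/4}$ — precisely what the definition of $\epsilon_0$ is designed to produce — is the crux of the argument. Symmetrically, one cannot simply use the sharp $\delta_1$ on the whole half-line, since $\delta_1(\epsilon)\to(\xi-\alpha/T)/c_2=O(1)$ as $\epsilon\to\infty$, flattening the probability bound and making the tail integral diverge; reverting to the crude inverse bound \eqref{eq:alldelta} on $[\epsilon_1,\infty)$, whose associated $\delta_2$ keeps growing with $\epsilon$, is what restores integrability. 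The remaining content is routine algebra with $c_1,\ldots,c_4$ and elementary integral estimates.
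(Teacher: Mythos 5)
Your proposal is correct and follows essentially the same route as the paper: the tail-integral identity \eqref{eq:cdfIntegral} split at $\epsilon_0$ and $\epsilon_1$, the trivial bound $\P\le 1$ on $[0,\epsilon_0]$ contributing the dominant $O(1/\sqrt{T})$ term, and the contrapositive of Lemma~\ref{lem:finddelta} combined with Lemma~\ref{lem:allElements} to show the remaining two pieces decay faster than any polynomial, with $\delta_1(\epsilon_0)\gtrsim T^{-1/4}$ being the crux exactly as in the paper. The only differences are in bookkeeping: the paper inserts an extra split point $c_5$ inside $[\epsilon_0,\epsilon_1]$ and uses a Gaussian tail bound on $[\epsilon_1,\infty)$ to sharpen prefactors, whereas you accept $O(T^2)$ prefactors against $e^{-c\sqrt{T}}$ (and split the tail by which branch of the $\min$ is active), which is equally sufficient for the claimed $k/\sqrt{T}$ bound.
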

\begin{proof}
Let $\delta_1(\epsilon)$ be given by \eqref{eq:delta1}
and $\delta_2(\epsilon)$ be given by \eqref{eq:delta2}.
We obtain a bound on the right side of
\eqref{eq:cdfIntegral} by application of Lemma
\ref{lem:finddelta} along with Lemma \ref{lem:allElements}.
\begin{align*}
    & \int_0^\infty (\P[\| \G_T-G_{OPT} \|^2 > \epsilon]) d
    \epsilon
    \leq \underbrace{\int_{0}^{\epsilon_0} 1 d \epsilon}_{d_1} \\
    \nonumber
    & +\underbrace{\int_{\epsilon_0}^{\epsilon_1} 
    2b \exp\left(-\frac{T}{2} \text{min }
    \left \{\left(\frac{\delta_1(\epsilon)}{4
    \|J\|_\infty^2}\right)^2,
    \frac{\delta_1(\epsilon)}{4 \|J\|_\infty^2} \right
    \}\right)  d \epsilon}_{d_2} \\
    \nonumber
    & + \underbrace{\int_{\epsilon_1}^{\infty} 
    2b \exp \left(-\frac{T}{2} \text{min }
    \left \{\left(\frac{\delta_2(\epsilon)}{4
    \|J\|_\infty^2}\right)^2,
    \frac{\delta_2(\epsilon)}{4 \|J\|_\infty^2} \right 
    \}\right)  d \epsilon}_{d_3},
\end{align*}
where the integrand of $d_1$ results from the fact that the probability is at most $1$. The bounds used above were valid for T $\geq \max \left \{\frac{2 \alpha}{\xi}, p \right \}$.
We now bound each term $d_1, d_2$ and $d_3$ separately.
\subsubsection{Bound $d_1$}
$d_1$ evaluates to to $\epsilon_0 = \frac{k_1}{\sqrt{T}}$ where
\begin{align*}
    k_1 = \frac{4 \|J\|_\infty^4 \alpha^2}{\xi^2}.
\end{align*}

\subsubsection{Bound $d_2$} 
Assign
\begin{align*}
    \gamma_1(\epsilon) = \frac{\delta_1(\epsilon)}{4 \|J\|_\infty^2},  \; 
    \gamma_2(\epsilon) = \frac{\delta_2(\epsilon)}{4 \|J\|_\infty^2}.
\end{align*}

Note that $\gamma_1(\epsilon)$ is monotonically increasing
for $\epsilon \geq 0$. This can be seen by observing that 
$\frac{d \gamma_1(\epsilon)}{d \epsilon} \geq 0 \text{ for } \epsilon \geq 0$. As a result, we have that for any $c \geq 0$, $\gamma_1(\epsilon) \geq
\gamma_1(c)$ for all $\epsilon \geq c$. Then if we define a constant $c_5$
such that $\epsilon_0 \leq c_5 \leq \epsilon_1$, we obtain the following
result.
\begin{align}
\nonumber
    & \int_{\epsilon_0}^{\epsilon_1} 
    2b \exp\left(-\frac{T}{2} \text{min }
    \left \{\gamma_1(\epsilon),
    \gamma_1(\epsilon)^2 \right
    \}\right)  d \epsilon \\
    \nonumber
    & = \int_{\epsilon_0}^{c_5} 
    2b \exp\left(-\frac{T}{2} \text{min }
    \left \{\gamma_1(\epsilon),
    \gamma_1(\epsilon)^2 \right
    \}\right)  d \epsilon \\
    \nonumber
    & + \int_{c_5}^{\epsilon_1} 
    2b \exp\left(-\frac{T}{2} \text{min }
    \left \{\gamma_1(\epsilon),
    \gamma_1(\epsilon)^2 \right
    \}\right) d \epsilon\\
   \nonumber
    & \leq 2b c_5 \left( 
    \exp \left(-\frac{T \gamma_1(\epsilon_0)}{2} \right) + \exp \left(-\frac{T \gamma_1(\epsilon_0)^2}{2}  \right) \right) \\
    & + 2b \epsilon_1 \left( 
    \exp \left(-\frac{T \gamma_1(c_5)}{2} \right) + \exp \left(-\frac{T \gamma_1(c_5)^2}{2}  \right) \right).
    \label{eq:middleUpper}
\end{align}

Recall that 
\begin{align*}
    \gamma_1(\epsilon_0) = \frac{(\xi T - \alpha) \sqrt{\epsilon_0} - c_4}{ 4\|J\|_\infty^2 (c_2  \sqrt{\epsilon_0} +  c_3 )T}. 
\end{align*}  
The condition $T \geq \frac{2 \alpha}{\xi}$ tells us that $\xi T -\alpha \geq \frac{\xi T}{2}$. Then, plugging in the expressions for $\epsilon_0$ and $c_4$, the numerator is bounded below by
\begin{align*}
    \frac{\|J\|_\infty^2 \alpha}{\xi} (T^{3/4} -1).
\end{align*}
A crude bound on this may be obtained by noting that for $T \geq 4, \; (T^{3/4}-1) \geq
 \frac{T^{3/4}}{2}$.

The denominator of $\gamma_1(\epsilon_0)$ can be bounded
above as
\begin{align*}
    4 \|J\|_\infty^2 \left(c_2 \frac{2 \|J\|_\infty^2 \alpha}{\xi^2 T_0^{1/4}}+c_3 \right) T
\end{align*}
by noting that $\epsilon_0$ decreases with increasing $T$. Thus if we plug in some $T_0 \geq 1$, an upper bound on the denominator is obtained for $T \geq T_0$. The resultant bound on $\gamma_1(\epsilon_0)$ is
\begin{align*}
    \gamma_1(\epsilon_0) \geq \frac{c_6}{T^{1/4}},
\end{align*}
where
\begin{align*}
    c_6 = \frac{\alpha}{8 \xi \left(2 c_2  \frac{\|J\|_\infty^2 \alpha}{ \xi^2 T_0^{1/4}}+c_3 \right)}.
\end{align*}
We may also bound $\gamma_1(c_5)$ by writing
\begin{align*}
    \gamma_1(c_5) = \frac{(\xi T - \alpha) \sqrt{c_5} - c_4}{ 4\|J\|_\infty^2 (c_2  \sqrt{c_5} +  c_3 )T} \\
    \geq \frac{(\frac{\xi T}{2}) \sqrt{c_5} - c_4}{ 4\|J\|_\infty^2 (c_2  \sqrt{c_5} +  c_3 )T}. 
\end{align*}
Then, if we set $c_5 = \left( \frac{4 c_4}{\xi} \right)^2$, we obtain a bound for the expression above as 
\begin{align*}
    \frac{c_4 }{ 4\|J\|_\infty^2 (c_2  \frac{4 c_4 }{\xi} +  c_3 )} = c_7
\end{align*}
for $T \geq 1$. 
Note that for the value of $c_5$ to lie between $\epsilon_0$ and $\epsilon_1$, we must have T satisfy the following condition.
\begin{align*}
    T \geq \max \left \{
    1,
    \frac{2 \alpha^2}{\xi^2} \right \}.
\end{align*}
Now \eqref{eq:middleUpper} may be bounded above by
\begin{align}
    \nonumber
    \label{eq:midInt}
    & c_8 \exp \left(-\frac{c_6 T^{3/4}}{2}\right) + c_8 \exp(-\frac{c_6^2 \sqrt{T}}{2}) \\
    & + c_9 T^2 \exp \left(-\frac{c_7}{2} T \right) + c_9 T^2 \exp \left(-\frac{c_7^2}{2} T \right)
\end{align}
where
\begin{align*}
    & c_8 = 2 b c_5, \; c_9 = \frac{8 b \|J\|_\infty^4}{\alpha^2}.
\end{align*}

\subsubsection{Bound $d_3$}
$d_3$ may be bounded as
\begin{align*}
    \nonumber
    \int_{\epsilon_1}^{\infty} &
    2b \exp \left(-\frac{T}{2} \text{min }
    \left \{\gamma_2(\epsilon)^2,
    \gamma_2(\epsilon)^2 \right 
    \}\right)  d \epsilon \\
    \nonumber
    & \leq \underbrace{\int_{\epsilon_1}^{\infty} 
    2b \exp \left(-\frac{T}{2} \gamma_2(\epsilon)^2\right) d \epsilon}_{d_4} \\
    & + \underbrace{\int_{\epsilon_1}^{\infty}
    2b \exp \left(- \frac{T}{2} \gamma_2(\epsilon) \right) d \epsilon}_{d_5}.
\end{align*}

We can evaluate $d_4$ as
\begin{align}
    \label{eq:exponentialTail}
    \nonumber 
    \int_{\epsilon_1}^{\infty} &
    2b \exp \left(- \frac{T}{2} \gamma_2(\epsilon) \right) d \epsilon \\
    & = 4b \lambda(\sqrt{\epsilon_1}+\lambda) \exp \left(\frac{\|J\|_\infty^2}{\xi \lambda} \right)
    \exp \left(-\frac{\sqrt{\epsilon_1}}{\lambda} \right),
\end{align}
with
\begin{align*}
    \lambda = \frac{8 \|J\|_\infty^2 c_3}{\alpha}.
\end{align*}

Plugging $\epsilon_1$ to the right hand side 
of \eqref{eq:exponentialTail} provides 
\begin{align}
    \nonumber
    \label{eq:expInt}
    4b \lambda \left(\frac{2\|J\|_\infty^2 T}{\alpha}+\lambda \right) \exp \left(\frac{\|J\|_\infty^2}{\lambda} \left(
\frac{1}{\xi}-\frac{2T}{\alpha} \right) \right) \\
    = c_{10} T \exp \left(-\frac{T}{c_{12}} \right)+ c_{11} \exp \left(-\frac{T}{c_{12}} \right),
\end{align}
with 
\begin{align*}
    c_{10} &= 8b \lambda \frac{ \|J\|_\infty^2}{\alpha} \exp \left(\frac{\|J\|_\infty^2}{\xi \lambda} \right), \\
    c_{11} &:= 4b \lambda^2 \exp \left(\frac{\|J\|_\infty^2}{\xi \lambda} \right) , \;
    c_{12} = \frac{\alpha \lambda}{2 \|J\|_\infty^2}.
\end{align*}

Meanwhile, $d_5$ may be
bounded by applying the Gaussian tail bound
$\frac{1}{\sqrt{2\pi}}\int_{z}^{\infty}e^{-x^2/2}
dx \le \sqrt{\frac{2}{\pi}}\frac{1}{z}e^{-z^2/2}$
. See exercise 2.11 in \cite{wainwright2019high}.
\begin{align}
\nonumber
\int_{\epsilon_1}^{\infty} &
    2b \exp \left(-\frac{T}{2} \gamma_2(\epsilon)^2\right) d\epsilon\\
    & \leq \left(4b \sigma^2 T +8b\sigma T \frac{\|J\|_\infty^2}{y_1 \xi} \right) \exp \left (-\frac{y_1^2}{2 T} \right),
     \label{eq:gaussianTail}
\end{align}
where 
\begin{align*}
    \sigma = \frac{4 c_3 \| J\|_\infty^2}{\alpha}, \;
    y_1 = \frac{\sqrt{\epsilon_1} - \frac{\|J\|_\infty^2}{\xi}}{\sigma}.
\end{align*}

Substituting $\epsilon_1$ into our definition for $y_1$ provides
\begin{align*}
    y_1 = \frac{\frac{2 \|J\|_\infty^2 T}{\alpha} - \frac{\|J\|_\infty^2}{\xi}}{\sigma}.
\end{align*}
$T \geq \frac{2 \alpha}{\xi}$ gives the loose bound
\begin{align*}
    y_1 \geq \frac{\|J\|_\infty^2 T}{\sigma \alpha}.
\end{align*}
Then the right side of \eqref{eq:gaussianTail}
is bounded below by
\begin{align}
    \label{eq:gaussianInt}
    c_{13} T \exp \left( \frac{-T}{c_{15}} \right)
+ c_{14}  \exp \left( \frac{-T}{c_{15}} \right),
\end{align}
where 
\begin{align*}
    c_{13} = 4 b \sigma^2, \;
    c_{14} = \frac{8 b \alpha \sigma^2}{\xi}, \;
    c_{15} = 2 \frac{\sigma \alpha}{\|J\|_\infty^2}.
\end{align*}

Note that each of the eight terms of \eqref{eq:midInt}, \eqref{eq:expInt}, and
\eqref{eq:gaussianInt} may be expressed in the form
\begin{align*}
    a_i T^{\left(m_i-1/2\right)} \exp(-b_i T^{n_i}).
\end{align*}
To complete our proof, we must find constants $k_i$ for $2 \leq i \leq 9$
 such that each of these terms is bounded as
\begin{align}
    \label{eq:boundPolyExp}
    a_i T^{\left(m_i-1/2 \right)} \exp(-b_i T^{n_i}) \leq \frac{k_i}{\sqrt{T}}
\end{align}
for $T \geq T_0$. To do so, note that
\begin{align*}
    a_i T^{m_i} \exp(-b_i T^{n_i})
\end{align*}
is maximized by $T_{\max, i} = \left(\frac{m_i}{n_i b_i} \right)^{1/n_i}$,
and monotonically decreasing for $T \geq T_{\max, i}$. Then if we choose
$T_0 \geq T_{max, i}$, and set 
\begin{align*}
    k_i = a_i T_0^{m_i} \exp(-b_i T_0^{n_i}),
\end{align*}
\eqref{eq:boundPolyExp} is satisfied.

Thus we set $k = \sum_{i=1}^9 k_i$ and 
require $T_0$ to be greater than or
equal to
$\max \left \{\frac{2 \alpha}{\xi}, p, 4,
\frac{2 \alpha^2}{\xi^2}, T_{\max, i} \right \}$.
\end{proof}

\subsection{Model Reduction Error}
\label{ss:reduction}
The only term that remains to be bounded is the one arising from 
the model reduction step. 
The bound on this term arises from the fact that
\begin{align*}
    \E[\|\y_t^A - \hat{\y}\|^2] \leq \|\H^A - \H^R\|_\infty^2 \|\z\|_\Pow^2.
\end{align*}
The balanced reduction step of Alg. \ref{alg:REDAR} guarantees that $\|\H^A -
\H^R\|_\infty \leq \phi$. 

Theorem \ref{thm:mainResult} now follows by applying Lemma \ref{lem:decomposition} to split
the expected squared error of our estimate into
the optimal estimator squared error, the finite model order error, the model reduction error, and the finite date error. Subsection
\ref{subsec:finitemodel} demonstrates the bound on the finite model order error. The finite data error is bounded in Subsection \ref{subsec:finitedata}. 

\section{Discussion}
\label{sec:discussion}
A slightly different result following from the same
analysis is provided below, along with a couple of notes regarding the
error bounds obtained.
\begin{theorem}
    \label{thm:mainresult2}
    for $0 < \theta \leq 1$, let
    \begin{align*}\delta = 4 \|J\|_\infty^2 \max \left\{\frac{2}{T}\log{\frac{2b}{\theta}}, \sqrt{ \frac{2}{T} 
    \log{\frac{2b}{\theta}}}\right\}.
    \end{align*} 
    Assume $T \geq p$. With probability at least $1-\theta$,
    \begin{align*}[left = {\|H^{OPT} - \H^R\|_\infty \leq \empheqlbrace \; \;}]
        & \frac{\left(c_3 \delta+ \frac{c_4}{T}\right)}{\xi - c_2 \delta -\frac{\alpha}{T}} p +
        \phi & \delta \leq \frac{\xi - \frac{2\alpha}{T}}{c_2}\\
        &  \frac{T \left(c_3 \delta+ \frac{c_4}{T}\right)}{\alpha} p +
        \phi & \delta > \frac{\xi - \frac{2\alpha}{T}}{c_2}
    \end{align*}
\end{theorem}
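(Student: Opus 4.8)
The plan is to reduce the claim to the machinery already assembled for Theorem~\ref{thm:mainResult}: split the error by the triangle inequality into a model-reduction part and a VARX-estimation part, control the estimation part on a high-probability event using Lemmas~\ref{lem:allElements}, \ref{lem:normBound} and \ref{lem:smalldelta}, and tune $\delta$ so that the failure probability is exactly $\theta$.

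First I would write $\|H^{OPT}-\H^R\|_\infty \le \|H^{OPT}-\H^A\|_\infty + \|\H^A-\H^R\|_\infty \le \|H^{OPT}-\H^A\|_\infty + \phi$, using the balanced-reduction tolerance built into Alg.~\ref{alg:REDAR}. To bound the remaining term, recall that $\yStar = G_{OPT}\d_t = H^{OPT}(q)\z_t$ and $\y_t^A = \G_T\d_t = \H^A(q)\z_t$ with $\d_t = \z_{t-p:t}$, so $H^{OPT}-\H^A$ is a length-$p$ FIR operator whose $n_z$-column coefficient blocks are precisely the blocks of $G_{OPT}-\G_T$. Evaluating on $|z|=1$ and applying the triangle inequality over the $p$ blocks, each of operator norm at most $\|G_{OPT}-\G_T\|$, gives $\|H^{OPT}-\H^A\|_\infty \le p\,\|G_{OPT}-\G_T\|$; this is the $\Hinf$ counterpart of the inequality $\E[\|\yStar-\y_t^A\|^2]\le p\,\E[\|G_{OPT}-\G_T\|^2]\,\|\z\|_\Pow^2$ used in Subsection~\ref{subsec:finitedata}.

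Next I would pass to the high-probability event. Fixing $\delta$ as in the statement and invoking Lemma~\ref{lem:allElements} (valid since $T\ge p$), with probability at least $1-\theta$ every entry of $\Delta\Q$ and of $\Delta\N$ has magnitude at most $\delta$: the stated $\delta$ is chosen exactly so that $T\min\{\delta^2/(32\|J\|_\infty^4),\,\delta/(8\|J\|_\infty^2)\} = \log(2b/\theta)$, whence the bound of Lemma~\ref{lem:allElements} collapses to $\theta$; a one-line case check on whether $\frac{2}{T}\log(2b/\theta)$ exceeds $1$ determines which branch of the $\max$ is active, and hence which term in the $\min$ is the binding one. On this event, Lemma~\ref{lem:normBound} gives $\|\G_T-G_{OPT}\| \le (c_3\delta + c_4/T)\,\|(Q+\Delta\Q+\frac{\alpha}{T}I)^{-1}\|$.

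Finally I would branch on the size of $\delta$. When $\delta \le (\xi-\frac{2\alpha}{T})/c_2$ we have in particular $\delta < (\xi-\frac{\alpha}{T})/c_2$, so Lemma~\ref{lem:smalldelta} applies and $\|(Q+\Delta\Q+\frac{\alpha}{T}I)^{-1}\| \le 1/(\xi - c_2\delta - \frac{\alpha}{T})$; otherwise the unconditional bound~\eqref{eq:alldelta} gives $\|(Q+\Delta\Q+\frac{\alpha}{T}I)^{-1}\| \le T/\alpha$. Substituting each of these into $\|H^{OPT}-\H^R\|_\infty \le p\,\|\G_T-G_{OPT}\| + \phi$ yields the two cases of the statement. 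I do not expect a genuine obstacle here: the result is essentially a repackaging of Lemmas~\ref{lem:allElements}, \ref{lem:normBound} and \ref{lem:smalldelta} together with the reduction tolerance, so the only care needed is the bookkeeping that confirms the failure probability is exactly $\theta$ for the prescribed $\delta$ and that the FIR estimate contributes the factor $p$.
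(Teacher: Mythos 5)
Your proposal is correct and follows essentially the same route as the paper's proof: the triangle-inequality split into $\|H^{OPT}-\H^A\|_\infty + \phi$, the bound $\|H^{OPT}-\H^A\|_\infty \le p\,\|G_{OPT}-\G_T\|$, Lemma~\ref{lem:allElements} with the stated $\delta$ to get the $1-\theta$ event, and Lemmas~\ref{lem:normBound}, \ref{lem:smalldelta} versus \eqref{eq:alldelta} for the two branches. Your explicit verification that the prescribed $\delta$ makes the failure probability collapse to $\theta$, and your block-wise justification of the factor $p$, are details the paper leaves implicit but are consistent with it.
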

\begin{proof}
    By the triangle inequality,
    \begin{align*}
        \|H^{OPT} - \H^R\|_\infty \leq 
        \|H^{OPT} - \H^{A}\|_\infty +
        \|\H^{A} - \H^{R}\|_\infty.
    \end{align*}
    The second term is limited to be at most
    $\phi$ in Alg. \ref{alg:REDAR}. The first term
    may be bounded as 
    \begin{align*}
        \|H^{OPT} - \H^{A}\|_\infty \leq
        p \|G_{OPT} - \G_T\|.
    \end{align*}
    Applying Lemma \ref{lem:allElements} to $\delta$
    defined above yields that 
    \begin{align*}
    \MoveEqLeft
        \P( \underset{i, j}{\max} \{ |\Delta \Q_{ij}| \} < \delta
            \text{ and }  \underset{i, j}{\max} \{ | \Delta \N_{ij} | \} < \delta)
        \geq 1-\theta.
    \end{align*}
    The theorem now follows by bounding $\|G_{OPT} - \G_T\|$ with Lemmas
    \ref{lem:normBound} and \ref{lem:smalldelta}
    for $\delta \leq \frac{\xi - \frac{2\alpha}{T}}{c_2}$, and
    Lemma \ref{lem:normBound} and \eqref{eq:alldelta} for $\delta > \frac{\xi - \frac{2\alpha}{T}}{c_2}$.
\end{proof}

\begin{remark}
There are multiple free parameters left in the bound from Theorem \ref{thm:mainResult}. In particular, $\rho$ may be chosen as
any value between the spectral radius of the kalman filter and one. 
A smaller value of $\rho$ will increase L, but decrease $\frac{\rho^{p+1}}{1-\rho}$. As such, we can optimize over $\rho$ numerically to obtain 
the tightest bound. $T_0$ is also a free parameter, able to take any
value greater than that supplied in Lemma \ref{lem:ARXerror}. Choosing higher values
of $T_0$ will decrease the value of $k$, at the cost of making the bound
invalid for small values of $T$.
\end{remark}  

\begin{remark}
In practical application, the engineer does
not have access to all of the variables that are used to compute the bound a
priori. It is,
however, possible to estimate these from data. For instance, one could
perform an iterative approach in which a model with high complexity is used to 
obtain a rough estimate for system parameters before fitting 
a model with lower complexity. Similar ideas
are described in \cite{goldenshluger2001nonasymptotic}, 
\cite{goldenshluger1998nonparametric},
 and \cite{sarkar2019finite}.
\end{remark}

\section{Simulation}
\label{sec:simulation}
To test the derived error bound, random plants and controllers were created such
that the closed loop system was stable. The plants had the form of \eqref{eq:innovations},
while Linear Quadratic Gaussian controllers with random weight matrices and added noise, having the
the form of \eqref{eq:controlSystem} were used.
Algorithm~\ref{alg:REDAR} was applied to data generated from the closed loop
systems. It was seen that for each VARX model order $p$, and truncation bound
$\phi$, the prediction error was below the error bound at all timesteps. This result is shown below for one system with multiple values of $p$ and $\phi$. The bound is shown in orange, while the prediction error on a set of test
data as a function of $T$ in Alg.~\ref{alg:REDAR} is shown in blue.
\begin{figure}[H]
\includegraphics[scale=0.6]{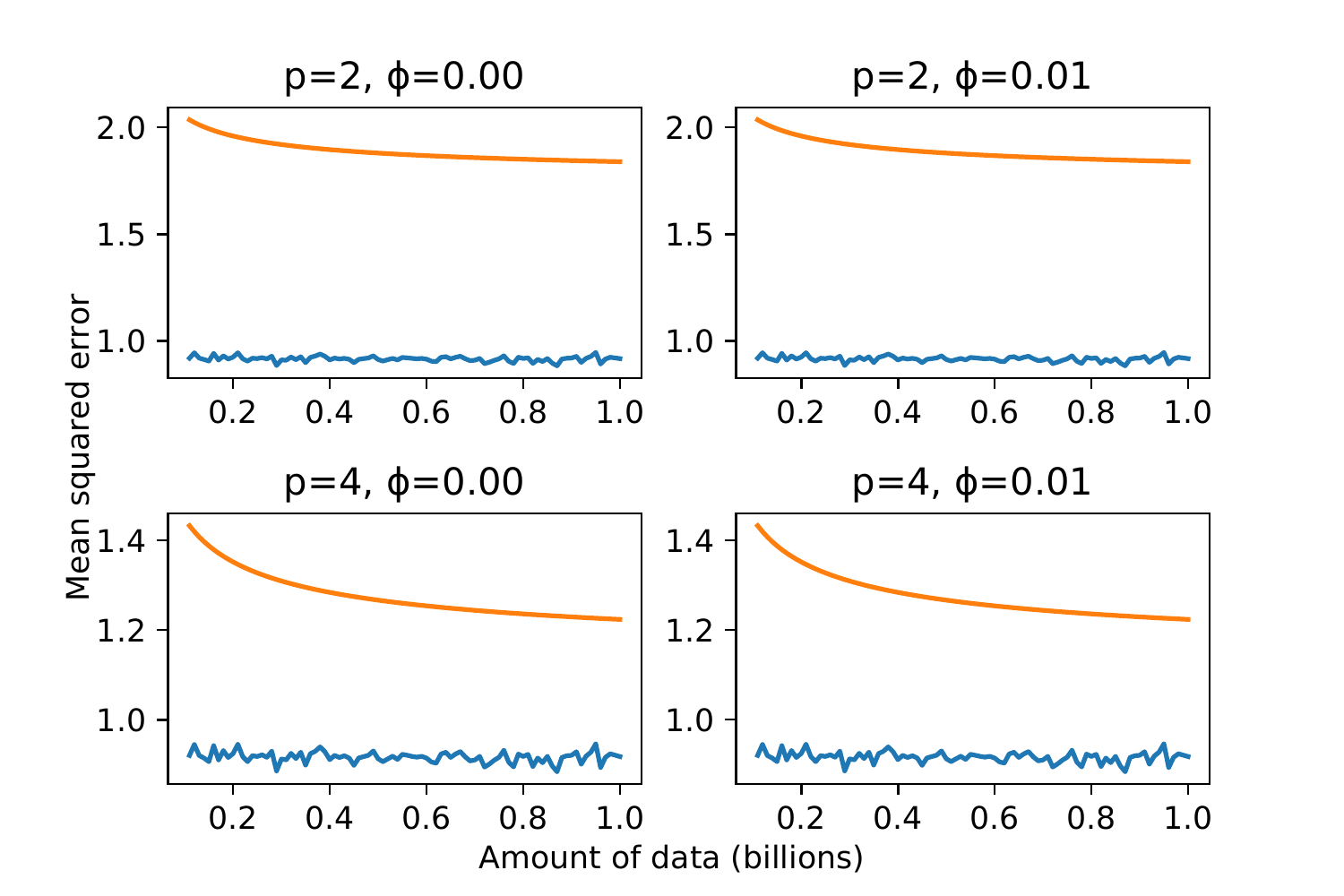}
\caption{Prediction errors of the REDAR algorithm run on a randomly generated system are seen to 
fall below the error bound at all times.}
\end{figure}
It should be noted that the bound is clearly not 
tight on the system above. A tighter bound could be obtained by removing
several of the cruder bounding techniques, and
choosing free parameters in the bound more carefully.
This effort was not undertaken in this this work. 

\section{Conclusion}
The finite sample behavior of
an algorithm known as REDAR was characterized for data generated in closed-loop.
The algorithm follows an approach used by many identification 
methods in which the data is fit to a VARX model, and the
system model is obtained via a reduction step. 
Due to the simple nature of the algorithm, it 
was possible to derive a non-asymptotic upper bound on the generalization error. Though the bound is not tight, it provides
the engineer with a notion of the effectiveness of
the model with a finite amount of data, which
allows for comparison of algorithms and parameter
selection for the model. Additionally, high probability bounds
on the $\Hinf$ norm of the error system from the estimated
model to the finite horizon Kalman Filter are obtained. It may be possible to utilize these 
bounds for robust control synthesis. 
As the analysis holds for identification of 
closed loop systems, this would allow
for an adaptive approach to robust control design to 
be applied. 

\section{Acknowledgements}
The authors thank Jianjun Yuan for helpful discussions
regarding the finite data bound.

\bibliography{ref}

\end{document}